\documentclass[11pt]{article}

\usepackage[T1]{fontenc}
\usepackage[utf8]{inputenc}
\usepackage{lmodern}
\usepackage[margin=1in]{geometry}
\usepackage{amsmath,amsfonts,amssymb,amsthm,mathtools,bm}
\usepackage{microtype}
\usepackage{booktabs}
\usepackage{enumitem}
\usepackage{xcolor}
\usepackage[round,authoryear]{natbib}
\usepackage{url}
\usepackage[hidelinks]{hyperref}
\usepackage[nameinlink,capitalise,noabbrev]{cleveref}
\allowdisplaybreaks

\newcommand{\Tr}{\operatorname{Tr}}

\newtheorem{theorem}{Theorem}
\newtheorem{proposition}[theorem]{Proposition}
\newtheorem{lemma}[theorem]{Lemma}
\newtheorem{corollary}[theorem]{Corollary}
\newtheorem{definition}[theorem]{Definition}

\crefname{theorem}{Theorem}{Theorems}
\crefname{proposition}{Proposition}{Propositions}
\crefname{lemma}{Lemma}{Lemmas}
\crefname{corollary}{Corollary}{Corollaries}
\crefname{definition}{Definition}{Definitions}
\crefname{assumption}{Assumption}{Assumptions}
\crefname{remark}{Remark}{Remarks}

\newcommand{\id}{\mathbb{I}}
\newcommand{\cD}{\mathcal{D}}
\newcommand{\cE}{\mathcal{E}}

\newcommand{\DKL}{D_{\mathrm{KL}}}
\newcommand{\DBS}{D_{\mathrm{BS}}}
\newcommand{\dd}{\mathrm{d}}
\newcommand{\ket}[1]{\lvert #1\rangle}
\newcommand{\ketbra}[1]{\lvert #1\rangle\!\langle #1\rvert}
\newcommand{\conv}{\operatorname{conv}}

\newcommand{\diag}{\operatorname{diag}}
\newcommand{\linspan}{\operatorname{span}}
\newcommand{\lamax}{\lambda_{\max}}

\title{Channel-Constrained Information Scheduling for Quantum Diffusion}

\author{
Qipeng Qian$^{1,2}$ \qquad Yuntao Qian$^{2}$\\[4pt]
$^{1}$SUPCON Technology, Hangzhou, China\\
$^{2}$College of Artificial Intelligence, Zhejiang University, Hangzhou 310027, China\\[4pt]
\texttt{qianqipeng@supcon.com} \qquad
\texttt{ytqian@zju.edu.cn}
}

\date{}

\begin{document}
\maketitle

\begin{abstract}
Information-based diffusion schedules organize the reverse problem by prescribing how much source information remains after each forward step. For mixed-state quantum diffusion, density-level targets need not correspond to trajectories reachable from the current realization. We introduce \emph{channel-constrained information scheduling}: each step optimizes over source-blind pure-state refinements reachable from the current state-resolved record through the prescribed channel. This defines a channel-constrained information clock $V_\xi$, whose inverse selects the least-corrupting step that reaches a target information level. Recursive inversion constructs a realizable equal-information schedule minimizing the worst source-information burden in Bayes reverse prediction. We establish attainment, continuity, monotonicity, convexity, finite support, and exact primal/dual formulations for $V_\xi$. We further give a sharp structural characterization: binary-qubit depolarizing trajectories attain the density-level Belavkin--Staszewski envelope, whereas every $d\ge3$ admits binary noncommuting families with a strict static--dynamic separation. Qutrit experiments show that the proposed scheduler nearly equalizes the realized multi-step burden, and learned reverse predictors recover the predicted allocation.
\end{abstract}

\section{Introduction}
A diffusion schedule determines how quickly a forward process removes information and, consequently, how the reverse problem is distributed across time. Classical information-based schedules exploit this observation by reparameterizing a fixed corruption path using entropy, mutual information, denoising difficulty, or related quantities \citep{stancevic2025entropic,raya2026infonoise,jeon2025itdd,ambrogioni2026atoms}. This viewpoint rests on a usually implicit premise: the intermediate representations selected by the information clock must themselves lie on a trajectory realizable by the prescribed forward dynamics.

Mixed-state quantum diffusion makes that premise nontrivial. A density operator admits many pure-state realizations that can retain different amounts of source information, so a density-level optimum need not be compatible with the trajectory already realized. We make reachability part of the scheduling objective itself: from the current trajectory, choose the least-corrupting channel step and source-blind refinement that attain the next information target, then iterate this construction over the full path.

This issue arises directly in trajectory-based quantum generative models. QuDDPM introduced stepwise generative learning of quantum-state distributions \citep{zhang2024quddpm}, and MSQuDDPM uses a mixed-state depolarizing forward process with a prescribed schedule \citep{kwun2024msquddpm}. Recent approaches further expose stochastic pure trajectories, monitored records, manifold-valued scores, and reverse quantum channels \citep{liu2025measurementqdm,xu2026ssdm,bompais2026reverse,gabbassov2025pauli}. In such models, the state-resolved forward record is part of the trajectory itself. A schedule should therefore be defined on representations that can be propagated from that record, not on density-level optima chosen independently at each time.

We solve this constructive problem with a channel-constrained information clock. Let $K$ be the current state-resolved record and $X$ the finite source index whose distinguishability is progressively erased. After a depolarizing step with retention $a$, each realized pure branch becomes mixed and is refined into a new pure-state record using $K$ but not $X$. Among all such source-blind, channel-compatible refinements, we define $V_\xi(a)$ as the minimum remaining $I(X{:}K')$. Unlike a density-only information scalar, $V_\xi$ depends on the channel and on the trajectory representation $\xi$ from which the next step must be generated.

The key methodological step is to \emph{invert} this reachable information value. Along the induced Markov chain $X\to K_{t-1}\to K_t$, the source-information contribution to Bayes reverse prediction is
\begin{equation}
 c_t:=\mathbb E\,\DKL\!\left(P(K_{t-1}|K_t,X)\middle\|P(K_{t-1}|K_t)\right)
 =I(X{:}K_{t-1})-I(X{:}K_t).
\label{eq:intro-reverse-kl}
\end{equation}
We prove that $V_\xi$ is attained, continuous, monotone, and convex. Its inverse therefore selects the largest retention---equivalently, the least-corrupting forward step---whose reachable optimum meets a prescribed next information level. Recursive inversion realizes equal decrements $c_t=J_0/T$ on an actual trajectory; because the decrements telescope to a fixed total, this construction is minimax for the hardest reverse-information step and optimal for every convex separable imbalance penalty. Thus $V_\xi$ is the reachable coordinate used to construct the schedule itself.

To determine when this channel constraint changes the schedule, we compare $V_\xi$ with the density-level Belavkin--Staszewski (BS) envelope
\begin{equation}
 B(\cE)=\sum_x p_x\DBS(\rho_x\|\bar\rho),\qquad \bar\rho=\sum_xp_x\rho_x.
\label{eq:intro-B}
\end{equation}
Optimal-unraveling theory identifies $B$ with a natural static lower envelope \citep{ruibal2025unraveling,matsumoto2013fdiv}. For binary qubits, we construct a BS-optimal trajectory that remains closed under depolarization, so the channel-constrained and density-level clocks coincide along the trajectory. For every $d\ge3$, by contrast, we exhibit a binary noncommuting family for which no finite BS optimum at one interior noise level can be propagated into a BS optimum at another. This sharp static--dynamic separation identifies when density-only scheduling misallocates the realized reverse-information burden. In noncommuting qutrits, recursive channel-constrained scheduling restores an almost uniform burden profile, and reverse predictors trained only on sampled trajectory records recover the same ordering.

\paragraph{Contributions.}
\begin{itemize}[leftmargin=*]
\item \textbf{Channel-constrained information scheduling.} We define the reachable information clock $V_\xi$ from the current state-resolved trajectory, connect its decrement exactly to the source-information term in Bayes reverse prediction, and show that recursive inversion constructs least-corrupting equal-information steps with a minimax worst-step burden.
\item \textbf{Constructive optimization and certification.} We prove attainment, continuity, monotonicity, convexity, and finite support for $V_\xi$, and derive an exact finite moment formulation together with a no-gap operator dual. The primal produces explicit source-blind refinements, while the dual provides lower certificates.
\item \textbf{Sharp static--dynamic characterization and validation.} BS information is channel-exact along binary-qubit depolarizing trajectories, whereas every $d\ge3$ admits a two-state noncommuting family with a strict gap for all finite BS minimizers. Qutrit experiments confirm the resulting multi-step correction and its recovery by learned reverse predictors.
\end{itemize}
\section{Related work}
Classical diffusion research has developed learned schedules and information-based time parameterizations for a corruption path that is taken as given \citep{sohldickstein2015deep,ho2020ddpm,song2021sde,nichol2021improved,kingma2021vdm,karras2022edm,stancevic2025entropic,raya2026infonoise,jeon2025itdd,ambrogioni2026atoms}. Our method changes the optimization domain of scheduling itself: rather than first selecting density-level optima and then checking whether they can be connected, each step optimizes over refinements reachable from the representation already produced by the forward process, and the resulting reachable value is inverted recursively to construct the schedule. Quantum diffusion models now span ensemble, mixed-state, measurement-trajectory, manifold-score, and reverse-channel formulations \citep{chen2024qgdm,zhu2024structure,zhang2024quddpm,kwun2024msquddpm,liu2025measurementqdm,xu2026ssdm,tang2025quadim,bompais2026reverse,gabbassov2025pauli}. CCMQD, for example, constrains learned maps to be CPTP \citep{zhu2025ccmqd}; here the forward map is prescribed, and the optimization is over trajectory realizations compatible with that map.

The closest quantum-information ingredients are HJW decompositions, physically realizable ensembles (PREs), and BS relative entropy. HJW characterizes static decompositions, while PRE theory asks which ensembles can be maintained or realized under monitored open-system dynamics \citep{hughston1993classification,wiseman2001pre,karasik2011bits,warszawski2018pre}. We use reachability as a \emph{scheduling constraint}: the next information value is optimized over the reachable refinement set and then inverted to choose the forward step. BS relative entropy is a maximal quantum $f$-divergence/reverse-test quantity \citep{belavkin1982cstar,matsumoto2010reverse,matsumoto2013fdiv,hiai2017different,bluhm2020bsdpi}. Most directly, \citet{ruibal2025unraveling} identify BS entropy with minimum realization KL and study how initially optimal measures evolve under common dynamics. That static minimum is our density-level benchmark; $V_\xi$ conditions the optimization on the \emph{current realized representation}, re-optimizes over source-blind channel-compatible refinements, and supplies the value function inverted by the scheduler. The qubit closure and higher-dimensional obstruction give the sharp geometry of this scheduling constraint.
\section{Trajectory model and channel-compatible refinements}
\label{sec:setup}
Let $P(X=x)=p_x>0$ and represent each source state as
\begin{equation}
 \rho_x=\sum_{k=1}^m q(k|x)P_k,\qquad J:=I(X{:}K),
\label{eq:trajectory-representation}
\end{equation}
where the distinct rank-one projectors $P_k$ are the state-resolved values of $K$. The index $X$ names the finite source ensemble being tracked; it may represent a condition, mixture component, or empirical sample identity. Thus $J$ measures how much source identity remains in the realized pure-state trajectory, even when different trajectories share the same density operators. The forward channel is
\begin{equation}
 \cD_a(A)=aA+(1-a)\Tr(A)\frac{\id}{d},\qquad a\in[0,1],
\label{eq:depolarizing}
\end{equation}
with multiplicative composition of retentions.

\begin{definition}[Channel-compatible pure-state refinement]
\label{def:causal-refinement}
For a CPTP map $\Phi$, a refinement is a family of pure-state laws $\nu_k$ satisfying
\begin{equation}
 \int Q\,\nu_k(\dd Q)=\Phi(P_k).
\label{eq:barycenter-constraint}
\end{equation}
The law may depend on the observed branch $k$ but not on $X$, so drawing $K'\sim\nu_K$ yields $X\to K\to K'$ and averages exactly to $\Phi(\rho_x)$.
\end{definition}
By HJW, every finite decomposition of $\Phi(P_k)$ can be implemented by measuring an environment in a Stinespring dilation \citep{hughston1993classification}. The refinement may therefore adapt to the realized record $k$ while remaining blind to $X$; the finite-support result below shows that finite branchwise decompositions suffice.

\paragraph{Operational setting.}
Here $K$ is the classical branch record of a trajectory-resolved or monitored forward process. The refinement may depend on the observed branch $K=k$ but not on the hidden source $X$, giving $X\to K\to K'$. We fix the average CPTP map $\Phi$ while allowing the HJW refinement of each known branch output to depend on $k$. This matches settings with an available state-resolved record and selectable unraveling; a fixed microscopic instrument corresponds to a further restriction of the feasible set.

\begin{proposition}[Reverse information burden]
\label{prop:reverse-burden}
For every label-independent step, with $R_t^x=P(K_{t-1}|K_t,X=x)$ and $R_t=P(K_{t-1}|K_t)$,
\begin{align}
 c_t&:=\mathbb E_{X,K_t}\DKL\!\left(R_t^X(\cdot|K_t)\|R_t(\cdot|K_t)\right)
 \label{eq:ct-def}\\
 &=I(X{:}K_{t-1}|K_t)=I(X{:}K_{t-1})-I(X{:}K_t).
\label{eq:ct-info-drop}
\end{align}
\end{proposition}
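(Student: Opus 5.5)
The plan is to prove the two equalities in \eqref{eq:ct-info-drop} separately. The first follows from the definition of conditional mutual information. The second combines the chain rule with the Markov property $X\to K_{t-1}\to K_t$, which holds because the step is label-independent: $K_t$ is drawn from a refinement law depending only on $K_{t-1}$, as in \cref{def:causal-refinement}. Throughout, $X$ is finite and the refinements may be taken finitely supported, so all quantities are discrete sums. Where a general pure-state law is used, every expression below is a KL divergence between regular conditional distributions, and the identities hold verbatim.

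\textbf{First equality.} Fix $X=x$ and $K_t=k'$. The quantity $R_t^x(\cdot|k')$ is the conditional law of $K_{t-1}$ given $(X,K_t)=(x,k')$, and $R_t(\cdot|k')$ is its conditional law given $K_t=k'$ alone. Hence
\[
\mathbb E_{X,K_t}\DKL\bigl(R_t^X\|R_t\bigr)=\mathbb E\log\frac{P(K_{t-1}|K_t,X)}{P(K_{t-1}|K_t)},
\]
which is the standard expression for $I(X{:}K_{t-1}|K_t)$. To confirm it, multiply numerator and denominator by $P(X|K_t)$ and rewrite the ratio as $P(X,K_{t-1}|K_t)/\bigl(P(X|K_t)P(K_{t-1}|K_t)\bigr)$. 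All events in the support have positive probability because $p_x>0$, so the conditionals are well defined almost surely.

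\textbf{Second equality.} Expand $I(X{:}K_{t-1},K_t)$ with the chain rule in two orders:
\[
I(X{:}K_{t-1})+I(X{:}K_t|K_{t-1})=I(X{:}K_t)+I(X{:}K_{t-1}|K_t).
\]
Label independence gives $P(K_t|K_{t-1},X)=P(K_t|K_{t-1})$, so $I(X{:}K_t|K_{t-1})=0$. Rearranging then yields $I(X{:}K_{t-1}|K_t)=I(X{:}K_{t-1})-I(X{:}K_t)$. Nonnegativity of conditional mutual information gives the data-processing inequality as a corollary, although the statement does not require it.

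\textbf{Main obstacle.} The only delicate point is justifying the Markov property and the existence of the conditionals when the refinement laws $\nu_k$ are general measures on pure states rather than finite decompositions. I would first invoke the finite-support reduction mentioned after \cref{def:causal-refinement}, which makes everything finite. For a general $\nu_k$, the joint law of $(X,K_{t-1},K_t)$ is $p_x\,q(k|x)\,\nu_k(\dd Q)$. Then $P(K_{t-1}=k|K_t=Q,X=x)\propto q(k|x)\,\frac{\dd\nu_k}{\dd\mu}(Q)$ with respect to the dominating measure $\mu=\sum_k\nu_k$. This makes both conditionals explicit, and the chain-rule argument goes through with densities.
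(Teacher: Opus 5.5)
Your proposal is correct and follows the paper's proof. Label independence gives the Markov property, so $I(X{:}K_t|K_{t-1})=0$; the two-order chain rule for $I(X{:}K_{t-1},K_t)$ then yields the information drop, and the posterior-divergence form of conditional mutual information identifies that drop with $c_t$. Your remark about densities with respect to $\mu=\sum_k\nu_k$ for non-finite refinements is more explicit than the paper's treatment but does not change the argument.
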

For population cross-entropy reverse training, the same quantity appears as an explicit additive term:
\begin{equation}
 \mathcal L_t(\theta)
 =H(K_{t-1}|K_t,X)+c_t
 +\mathbb E_{K_t}\DKL\!\left(R_t(\cdot|K_t)\|Q_{\theta,t}(\cdot|K_t)\right).
\end{equation}
The first term is irreducible forward stochasticity, the last is model mismatch, and $c_t$ is the ambiguity created by hiding $X$. Unlike architecture-dependent optimization effects, $c_t$ is fixed once the forward trajectory is fixed. It is therefore a schedule-controlled component of Bayes reverse prediction for state-resolved trajectory models. A density-level clock can misallocate this burden when its intermediate optima are not causally realizable; \cref{sec:numerics} demonstrates the effect in qutrit trajectories. Appendix~\ref{app:reverse} gives the derivation.

\section{Channel-constrained information scheduling}
\label{sec:physical-clock}
\begin{definition}[Channel-constrained information value]
\label{def:V}
\mbox{}\par\noindent For the current representation $\xi=(p_x,q(k|x),P_k)$,
\begin{equation}
 V_\xi(a):=\inf_{\{\nu_k\}} I(X{:}K')
 \quad\text{s.t.}\quad
 \int Q\,\nu_k(\dd Q)=\cD_a(P_k)\ \forall k.
\label{eq:V-def}
\end{equation}
Duplicate labels producing the same projector can be merged without increasing information, so $K'$ is taken to be state-resolved.
\end{definition}
$V_\xi$ restricts the static information objective to realizations reachable from the current branches and is therefore representation-dependent, source-blind, and record-adaptive.

\begin{theorem}[Structure of the channel-constrained information clock]
\label{thm:V-properties}
For every finite trajectory representation $\xi$:
\begin{enumerate}[label=(\roman*),leftmargin=*]
\item the infimum in \eqref{eq:V-def} is attained;
\item $V_\xi(a)$ is nondecreasing, convex, and continuous on $[0,1]$;
\item the endpoints satisfy
\begin{equation}
 V_\xi(0)=0,\qquad V_\xi(1)=J;
\label{eq:V-endpoints}
\end{equation}
\item the keep-or-reset refinement gives
\begin{equation}
 0\le V_\xi(a)\le aJ.
\label{eq:V-linear-upper}
\end{equation}
\end{enumerate}
Moreover, an optimum exists with at most $md^2+2$ output atoms.
\end{theorem}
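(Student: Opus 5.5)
The plan is to recast \eqref{eq:V-def} as a minimization of a concave functional over a compact convex set of measures, and then read off each item. Write $\mu_x=\sum_k q(k|x)\nu_k$ and $\bar\mu=\sum_x p_x\mu_x=\sum_k r_k\nu_k$ with $r_k=\sum_x p_x q(k|x)$. Then
\[
I(X{:}K')=\sum_x p_x \DKL(\mu_x\|\bar\mu),
\]
where the $\mu_x$ are absolutely continuous with respect to $\bar\mu$ and all densities are bounded, because $q(k|x)/r_k\le 1/p_x$. So the objective is $\int f(\tfrac{d\mu_\cdot}{d\bar\mu})\,d\bar\mu$ for the convex function $f(u)=\sum_x p_x u_x\log u_x$ on the simplex-like set of density vectors. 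Each density vector is an affine image of the vector $w(Q)=(d\nu_k/d\bar\mu)_k$.

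\textbf{Attainment and finite support (i).} The feasible set is the product over $k$ of the sets of probability measures on the compact manifold of pure states with barycenter $\cD_a(P_k)$. Each factor is weak-* compact and convex, and it is nonempty, e.g.\ the spectral decomposition works. The objective is jointly weak-* lower semicontinuous, since KL divergence between measures is lsc in both arguments. Hence the infimum is attained.

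For the atom count, I will disintegrate. Let $\bar\mu$ be the output law and $w(Q)\in\Delta$ the branch-posterior vector, so that $\nu_k=w_k(Q)\bar\mu/r_k$. The objective is then $\int g(Q,w)\,d\bar\mu$, and the constraints are linear moment conditions in the pair $(Q,w)$: for each $k$, $\int w_k Q\,d\bar\mu=r_k\cD_a(P_k)$. That gives $m$ Hermitian constraints, i.e.\ $md^2$ real ones, plus normalization. The objective is linear in the pushforward measure on $(Q,w)$-space. A Carathéodory/Richter–Rogosinski argument on the compact convex set of pushforwards then gives an optimum with at most $md^2+2$ atoms: the $md^2$ constraints, normalization, and the objective value. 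I expect this reduction, which rewrites the problem as linear in a measure on lifted space, to be the main technical point. Care is needed that merging duplicate $Q$ atoms does not increase the value; the definition's remark says it does not, by the data-processing inequality.

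\textbf{Endpoints and the linear bound (iii),(iv).} At $a=0$ every $\nu_k$ may be taken to be the same decomposition of $\id/d$, which gives $I=0$. At $a=1$ the barycenter $P_k$ is extremal, so $\nu_k=\delta_{P_k}$ is forced. Then $K'$ equals $K$ up to distinct projectors, and $V=J$. For (iv), use keep-or-reset: with probability $a$ output $P_k$, and otherwise output a fixed decomposition of $\id/d$, e.g.\ uniform over a fixed basis, independent of $k$. This has barycenter $\cD_a(P_k)$. It is a mixture channel $K\to K'$ in which, with probability $1-a$, the output is independent of $K$. A direct computation, or convexity of mutual information in the channel, gives $I(X{:}K')\le aJ$; the extra atoms are distinct from the $P_k$ up to a generic basis choice, or the bound still holds when they merge.

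\textbf{Monotonicity, convexity, continuity (ii).} For convexity, take optimal families at $a_0$ and $a_1$ and $\lambda\in[0,1]$. Mix them, $\nu_k=\lambda\nu_k^{(0)}+(1-\lambda)\nu_k^{(1)}$, with the atoms tagged by a hidden flag. The barycenter is then $\cD_{\lambda a_0+(1-\lambda)a_1}(P_k)$. The tagged output has information $\lambda V(a_0)+(1-\lambda)V(a_1)$, and forgetting the flag can only decrease information by data processing. So $V$ is convex. Monotonicity follows from the same device applied to the pair $(0,a_1)$: convexity together with $V(0)=0\le V(a)$ makes $V(a)/a$ nondecreasing, which forces $V$ to be nondecreasing. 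Continuity on the open interval is automatic from convexity. At $a=0$ it follows from $0\le V\le aJ$. At $a=1$, $V$ is nondecreasing and convex, hence left-continuous only if it is lsc there. I will get lower semicontinuity from the compactness argument of (i): take limits of optimal measures as $a\to1$, whose barycenters converge, and apply lsc of KL. Together with $V\le J$ this gives continuity at $1$.
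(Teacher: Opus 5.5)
Your proposal is correct. For attainment, the endpoints, continuity at $1$, and the lifted moment/Carath\'eodory atom count, it follows essentially the paper's route. One small variation: for $V_\xi(a)\le aJ$ you use convexity of mutual information in the channel, which gives only the needed inequality but removes the paper's generic-basis requirement for computing $I=aJ$ exactly.

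The genuinely different step is monotonicity, and the comparison is as follows.
\begin{itemize}
\item The paper proves monotonicity independently of convexity. It writes $\cD_{a_1}=\cD_{a_1/a_2}\circ\cD_{a_2}$, post-processes an $a_2$-refinement with an explicit keep-or-reset kernel, and then invokes data processing.
\item You prove convexity first, via flag-tagged mixing plus data processing. You then deduce monotonicity from $V_\xi(0)=0$ and $V_\xi\ge0$, through $V_\xi(a)\le (a/b)V_\xi(b)$ for $0<a\le b$.
\item Your route is shorter and gives more. Taking $b=1$ in that star-shapedness inequality already yields $V_\xi(a)\le aJ$, and hence continuity at $0$, without constructing any refinement.
\item The paper's route buys independence from the fact that $\cD_a(P_k)$ is affine in $a$. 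It uses only that stronger noise is a post-processing of weaker noise. That is exactly the property the paper singles out for extending the scheduling construction to other ordered noise families.
\item For a family ordered by composition but not affine in its parameter, your mixing step would no longer produce feasible refinements at intermediate parameters. Both your convexity and the monotonicity you derive from it would then be unavailable.
\end{itemize}

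One slip: your opening sentence calls the objective concave. For a fixed input law, mutual information is convex in $(\nu_k)$, since it is convex in the channel. Nothing later in your argument relies on that word.
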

The proof combines compactness of the pure-state space, lower semicontinuity of mutual information, composition of depolarizing refinements, and Carath\'eodory.  Monotonicity follows because stronger depolarization is a post-processing of weaker depolarization; convexity follows by mixing refinements.  The upper bound is constructive: retain $P_k$ with probability $a$ and otherwise use a source-independent decomposition of $\id/d$ (Appendix~\ref{app:V-properties}).

For a target retained information $0\le\tau\le J$, define
\begin{equation}
 a_\xi^\star(\tau):=\max\{a\in[0,1]:V_\xi(a)\le\tau\}.
\label{eq:a-star}
\end{equation}
\begin{theorem}[Least-noise information step]
\label{thm:least-noise-step}
For $0\le\tau<J$,
\begin{equation}
 V_\xi(a_\xi^\star(\tau))=\tau,
\label{eq:a-star-equality}
\end{equation}
and no larger retention can attain information at most $\tau$.
\end{theorem}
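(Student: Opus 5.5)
The plan is to derive \cref{thm:least-noise-step} directly from the structural facts in \cref{thm:V-properties}, namely continuity, monotonicity and the endpoint values, using only the intermediate value theorem and a short argument about the maximum in \eqref{eq:a-star}.

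\emph{Step 1: the feasible set is a nonempty closed interval.} Let $S_\tau:=\{a\in[0,1]:V_\xi(a)\le\tau\}$. Since $V_\xi(0)=0\le\tau$ by \eqref{eq:V-endpoints}, $S_\tau$ is nonempty. Since $V_\xi$ is continuous on $[0,1]$, $S_\tau$ is closed. Since $V_\xi$ is nondecreasing, $S_\tau$ is an initial segment. Hence $S_\tau=[0,a^\star]$ for some $a^\star\in[0,1]$, the maximum in \eqref{eq:a-star} is attained, and $a_\xi^\star(\tau)=a^\star$ is well defined.

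\emph{Step 2: the equality $V_\xi(a^\star)=\tau$.} By construction $V_\xi(a^\star)\le\tau$. Because $\tau<J=V_\xi(1)$, we have $1\notin S_\tau$, so $a^\star<1$. Suppose $V_\xi(a^\star)<\tau$. Continuity at $a^\star$ then gives some $\delta>0$ with $V_\xi(a)<\tau$ for every $a\in(a^\star,a^\star+\delta)\cap[0,1]$. This interval is nonempty because $a^\star<1$, and its points would lie in $S_\tau$, contradicting maximality. Hence \eqref{eq:a-star-equality} holds. An equivalent route is to apply the intermediate value theorem on $[a^\star,1]$: it produces a point $b$ with $V_\xi(b)=\tau$, and maximality together with monotonicity forces $b=a^\star$.

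\emph{Step 3: no larger retention works.} For $a>a^\star$, maximality gives $a\notin S_\tau$, that is, $V_\xi(a)>\tau$. Because the infimum in \eqref{eq:V-def} is attained by \cref{thm:V-properties}(i), every admissible refinement at retention $a$ has $I(X{:}K')\ge V_\xi(a)>\tau$. So no channel-compatible refinement at a larger retention reaches information at most $\tau$.

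The only delicate point is the strict-versus-weak inequality in Step 2. This is where continuity, and specifically continuity at the interior point $a^\star$, is essential. Without it, $V_\xi$ could jump across $\tau$ and the equality could fail. Convexity is not needed here. Right-continuity at $a^\star$, which follows from the continuity in \cref{thm:V-properties}(ii), is exactly what the argument uses. The hypothesis $\tau<J$ is needed only to guarantee $a^\star<1$. If $\tau=J$, equality still holds trivially with $a^\star=1$.
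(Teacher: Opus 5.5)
Your proof is correct and follows essentially the same route as the paper: the set $\{a:V_\xi(a)\le\tau\}$ is nonempty because $V_\xi(0)=0$ and closed by continuity; $\tau<J=V_\xi(1)$ gives $a^\star<1$; and $V_\xi(a^\star)<\tau$ would contradict maximality. (The monotonicity you use in Step 1 is harmless but not needed.) One small correction: in Step 3, the bound $I(X{:}K')\ge V_\xi(a)>\tau$ for every refinement follows from the definition of the infimum alone, so attainment is not needed there. The paper uses attainment at $a^\star$ itself, to produce an actual causal refinement that realizes exactly the target information $\tau$.
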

The maximum in \eqref{eq:a-star} is the least-corrupting realization of the requested information drop because increasing $a$ always retains more signal.  Continuity rules out a jump below the target at the maximizing point, while optimizer existence supplies the corresponding causal refinement.

\begin{corollary}[Equal reverse-information schedule]
\label{cor:equal-burden}
For any $J_0>0$ and $T\ge1$, recursive inversion realizes
\begin{equation}
 J_t=\left(1-\frac tT\right)J_0,
\label{eq:equal-J-grid}
\end{equation}
by choosing
\begin{equation}
 a_t=a_{\xi_{t-1}}^\star(J_{t-1}-J_0/T).
\end{equation}
Each step uses the largest retention compatible with its target, and
\begin{equation}
 c_t=\frac{J_0}{T}\qquad(t=1,\ldots,T).
\label{eq:equal-ct}
\end{equation}
\end{corollary}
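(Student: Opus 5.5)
The plan is an induction on $t$ that applies \cref{thm:least-noise-step} at each step. The telescoping identity of \cref{prop:reverse-burden} then turns the realized information levels into per-step burdens. Set $\xi_0$ to the initial representation, with $J(\xi_0)=I(X{:}K_0)=J_0$. The inductive hypothesis at stage $t-1$ has three parts. First, $\xi_{t-1}$ is a finite state-resolved representation of the form \eqref{eq:trajectory-representation}. Second, it was produced from $\xi_{t-2}$ by a source-blind channel-compatible refinement. Third, $I(X{:}K_{t-1})=J_{t-1}=(1-\frac{t-1}{T})J_0$.

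For the step from $t-1$ to $t$, put $\tau=J_{t-1}-J_0/T$. Then $0\le\tau<J_{t-1}=J(\xi_{t-1})$, which holds because $J_0>0$ and $t\le T$. \cref{thm:least-noise-step} applies and gives $a_t=a^\star_{\xi_{t-1}}(\tau)$ with $V_{\xi_{t-1}}(a_t)=\tau$. It also says that no larger retention reaches information at most $\tau$, which is exactly the ``largest retention compatible with its target'' clause. By \cref{thm:V-properties}(i) the infimum defining $V_{\xi_{t-1}}(a_t)$ is attained. The finite-support statement says an optimal refinement $\{\nu_k\}$ can be taken with at most $md^2+2$ output atoms. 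After merging duplicate projectors, as in \cref{def:V}, drawing $K_t\sim\nu_{K_{t-1}}$ gives a new finite state-resolved representation $\xi_t$ with $I(X{:}K_t)=\tau=J_t$. The chain $X\to K_{t-1}\to K_t$ holds because each $\nu_k$ ignores $X$. The barycenter constraint gives $\sum_k q(k|x)\cD_{a_t}(P_k)=\cD_{a_t}(\rho_x^{(t-1)})$, so the realized trajectory follows the prescribed channel. This closes the induction, and the terminal level is $J_T=0$.

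For the burdens, each step is label-independent, so \cref{prop:reverse-burden} gives $c_t=I(X{:}K_{t-1})-I(X{:}K_t)=J_{t-1}-J_t=J_0/T$.

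The main obstacle is keeping the induction well-posed. The optimizer from \cref{thm:V-properties} must be a finite representation that qualifies as $\xi_t$ for the next application. If merging of duplicate atoms were not handled, the new $K_t$ could fail to be state-resolved. I would handle this by noting that merging is a deterministic function of $K_t$ that cannot raise mutual information. It also cannot lower it below the infimum, because the merged family is still feasible. Hence the merged optimizer has information exactly $\tau$. The size of the representation, $m_t\le m_{t-1}d^2+2$, may grow but stays finite, and that is all the theorems require.
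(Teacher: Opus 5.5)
Your proposal is correct and follows the paper's proof. Both apply \cref{thm:least-noise-step} recursively with target $J_{t-1}-J_0/T$, pick a finite minimizing causal refinement, obtain the linear grid by induction, and use \cref{prop:reverse-burden} to turn the decrements into $c_t=J_0/T$. Your checks that $0\le\tau<J_{t-1}$ and that merging duplicate atoms leaves a finite state-resolved $\xi_t$ with information exactly $\tau$ spell out well-posedness details the paper's short proof leaves implicit.
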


Equal allocation is elementary once the total burden is fixed; the quantum difficulty is realizing the required intermediate levels. \Cref{thm:V-properties,thm:least-noise-step,cor:equal-burden} provide that missing step by constructing an actual causal trajectory with the equal-burden vector.

\begin{theorem}[Optimal allocation of a causally realizable reverse burden]
\label{thm:allocation-optimality}
For any $T$-step label-independent causal trajectory with the same $J_0$ and $I(X{:}K_T)=0$,
\begin{equation}
 \max_t c_t\ge\frac{J_0}{T}.
\label{eq:minimax-burden-lower}
\end{equation}
More generally, for every convex $\phi:[0,J_0]\to\mathbb R$,
\begin{equation}
 \sum_{t=1}^T\phi(c_t)\ge T\phi(J_0/T).
\label{eq:convex-burden-opt}
\end{equation}
The schedule in \cref{cor:equal-burden} attains both bounds; for strictly convex $\phi$, equality requires the equal-burden vector.
\end{theorem}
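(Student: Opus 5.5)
The plan is to reduce both bounds to a telescoping identity followed by an averaging argument. By \cref{prop:reverse-burden}, every label-independent step satisfies $c_t=I(X{:}K_{t-1})-I(X{:}K_t)$. Summing over $t=1,\ldots,T$ gives
\begin{equation*}
 \sum_{t=1}^T c_t = I(X{:}K_0)-I(X{:}K_T)=J_0-0=J_0 .
\end{equation*}
Each $c_t$ is a conditional mutual information $I(X{:}K_{t-1}|K_t)$, so $c_t\ge 0$. Together with the fixed sum, this gives $c_t\in[0,J_0]$, which places every $c_t$ in the domain of $\phi$. Monotonicity of $I(X{:}K_t)$ along the Markov chain $X\to K_{t-1}\to K_t$ (data processing) gives the same nonnegativity and is worth stating explicitly.

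For the minimax bound \eqref{eq:minimax-burden-lower}, the maximum of $T$ numbers is at least their average, so $\max_t c_t\ge J_0/T$. For \eqref{eq:convex-burden-opt}, apply Jensen's inequality to the uniform average of $c_1,\ldots,c_T$:
\begin{equation*}
 \frac1T\sum_t\phi(c_t)\ge\phi\Big(\frac1T\sum_t c_t\Big)=\phi(J_0/T).
\end{equation*}

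Attainment comes from \cref{cor:equal-burden}. That construction produces an actual label-independent causal trajectory with $c_t=J_0/T$ for every $t$ and $J_T=0$, so it is admissible in the theorem. Both bounds then hold with equality by direct substitution. The corollary itself rests on \cref{thm:least-noise-step}, which supplies an attained refinement at each level $J_{t-1}-J_0/T<J_{t-1}$. One check is needed here: the target must lie in $[0,J_{t-1})$ at every step so that the corollary applies recursively. The grid \eqref{eq:equal-J-grid} guarantees this.

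For the strict-convexity clause, recall that Jensen's inequality for a strictly convex $\phi$ with equal weights is an equality only when all points coincide. So equality forces $c_1=\cdots=c_T$, and the sum constraint then gives $c_t=J_0/T$.

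No step is a real obstacle; the only delicate point is bookkeeping. We must make sure that $I(X{:}K_0)$ is the stated $J_0$ and that \cref{prop:reverse-burden} applies to every step of an arbitrary competing trajectory. It does, because the hypothesis is precisely that each step is label-independent, which is what the Markov structure in that proposition requires.
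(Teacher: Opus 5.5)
Your proposal is correct and matches the paper's proof essentially step for step: you telescope $c_t=J_{t-1}-J_t\ge0$ from \cref{prop:reverse-burden} to get $\sum_t c_t=J_0$, then use max-versus-average, Jensen, attainment by \cref{cor:equal-burden}, and the equality case of strict Jensen. Your explicit check that each $c_t\in[0,J_0]$ lies in the domain of $\phi$ is a small detail the paper leaves implicit.
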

The inequalities themselves follow from $\sum_t c_t=J_0$ and Jensen's inequality (Appendix~\ref{app:V-properties}). Their relevance here is that the optimal vector is actually attainable: $V_\xi$ determines how to realize it while retaining as much signal as possible at every step. The resulting schedule is defined by reachable information targets rather than by a retention curve chosen in parameter space.

\section{Static benchmark and channel exactness}
\label{sec:BS}
For faithful $\sigma>0$ and $\rho\ge0$, let
\begin{equation}
 A_{\rho|\sigma}=\sigma^{-1/2}\rho\sigma^{-1/2},\qquad
 \DBS(\rho\|\sigma)=\Tr[\sigma A_{\rho|\sigma}\log A_{\rho|\sigma}],
\label{eq:DBS}
\end{equation}
with $0\log0=0$, and define $B(\cE)=\sum_xp_x\DBS(\rho_x\|\bar\rho)$ when the average $\bar\rho$ is faithful.  Maximal-$f$ data processing yields the following density-only envelope.

\begin{proposition}[BS lower envelope]
\label{prop:BS-lower}
Every finite trajectory representation of $\cE$ satisfies $B(\cE)\le I(X{:}K)$; consequently after a causal step,
\begin{equation}
 B(\cE')\le V_\xi(a).
\label{eq:B-lower-V}
\end{equation}
\end{proposition}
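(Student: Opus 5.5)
We prove the first inequality by writing $I(X{:}K)$ as an average of classical relative entropies and applying data processing for the maximal $f$-divergence to a preparation channel that maps the classical record back to density operators. Fix a finite trajectory representation $\rho_x=\sum_k q(k|x)P_k$ with $\bar\rho$ faithful, and set $q(k)=\sum_x p_xq(k|x)$. Then
\begin{equation*}
 I(X{:}K)=\sum_x p_x\,\DKL\bigl(q(\cdot|x)\,\big\|\,q(\cdot)\bigr).
\end{equation*}
Define the classical-to-quantum map $\Lambda(\diag(r))=\sum_k r_kP_k$. It is CPTP, and it sends the diagonal states $\diag(q(\cdot|x))$ and $\diag(q(\cdot))$ to $\rho_x$ and $\bar\rho$, respectively.

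The central fact we rely on is the reverse-test characterization of BS relative entropy \citep{matsumoto2013fdiv,matsumoto2010reverse}. For $\sigma>0$ it states that
\begin{equation*}
 \DBS(\rho\|\sigma)=\min\bigl\{\DKL(r\|s):\Lambda(\diag r)=\rho,\ \Lambda(\diag s)=\sigma,\ \Lambda\text{ CPTP}\bigr\}.
\end{equation*}
In particular, $\DBS(\Lambda(\diag r)\|\Lambda(\diag s))\le\DKL(r\|s)$ for every classical pair and every CPTP $\Lambda$. This is the maximality of $\widehat D_f$ with $f(t)=t\log t$. Applying the inequality to each $x$ with $r=q(\cdot|x)$ and $s=q(\cdot)$, then averaging with weights $p_x$, gives $B(\cE)\le I(X{:}K)$.

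\textbf{Support issues.} $\DBS(\rho_x\|\bar\rho)$ is finite because $\rho_x\le p_x^{-1}\bar\rho$. On the classical side, $q(\cdot|x)\ll q(\cdot)$ holds automatically. Hence the reverse-test inequality applies without any extra regularization.

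\textbf{The causal step.} Take any feasible refinement $\{\nu_k\}$ for $V_\xi(a)$. By Theorem~\ref{thm:V-properties} we may assume it is finitely supported. It induces a finite trajectory representation of $\cE'=\{p_x,\cD_a(\rho_x)\}$ because, by Definition~\ref{def:causal-refinement}, the law of $K'$ given $X=x$ averages to $\cD_a(\rho_x)$.

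If some state $Q$ appears under several labels, we merge those labels. The merged representation has state-resolved atoms, and merging is a post-processing, so it does not increase mutual information. The first part then gives $B(\cE')\le I(X{:}K')$ for every feasible refinement. Taking the infimum over refinements yields $B(\cE')\le V_\xi(a)$.

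\textbf{Faithfulness.} For $a<1$, $\cD_a(\bar\rho)>0$, so $B(\cE')$ is defined. For $a=1$, $B$ is used only when $\bar\rho$ is faithful, which is the standing assumption.

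\textbf{Main obstacle.} The only substantive point is the reverse-test inequality, meaning that BS relative entropy is the \emph{minimal} classical KL over all CPTP preparations, not merely one admissible value. If it has to be verified directly rather than cited, the plan is as follows. Take $\Lambda(\diag r)=\sum_k r_kM_k$ with $M_k$ PSD, write $\rho=\sum_k r_kM_k$ and $\sigma=\sum_k s_kM_k$, and use joint operator convexity of the perspective $(A,B)\mapsto B^{1/2}f(B^{-1/2}AB^{-1/2})B^{1/2}$ with $f(t)=t\log t$. This gives
\begin{equation*}
 \sigma^{1/2}f(A_{\rho|\sigma})\sigma^{1/2}\le\sum_k s_k f(r_k/s_k)\,M_k.
\end{equation*}
Taking the trace and using $\Tr M_k=1$ then yields $\DBS(\rho\|\sigma)\le\DKL(r\|s)$.
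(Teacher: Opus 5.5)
Your proposal is correct and follows the paper's route. You view the record as a classical-to-quantum preparation channel, apply the maximal-$f$ (reverse-test) inequality $\DBS(\rho_x\|\bar\rho)\le\DKL(q(\cdot|x)\|q(\cdot))$ for $f(t)=t\log t$ termwise, average over $x$, and evaluate at the finitely supported optimizer of $V_\xi(a)$ from \cref{thm:V-properties} (that theorem gives you such an optimizer, not finiteness of an arbitrary feasible refinement, but this suffices). The paper instead proves the key inequality inline via operator Jensen for the unital CP map $T(Z)=\bar\rho^{-1/2}\Phi(S^{1/2}ZS^{1/2})\bar\rho^{-1/2}$. Your perspective-convexity fallback would additionally need an $\epsilon\id$ regularization, because the $M_k$ can be singular (rank one in your application).
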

The inequality follows by viewing any finite trajectory representation as a classical-to-quantum preparation channel.  For $f(t)=t\log t$, operator Jensen/data processing maps the classical $\DKL(\mu_x\|\mu)$ to the BS divergence of the corresponding density operators, giving $\DBS(\rho_x\|\bar\rho)\le\DKL(\mu_x\|\mu)$ term by term.  Hence $B$ is a static lower envelope: it optimizes the target representation without accounting for the trajectory from which that representation must be reached.

For binary ensembles the static bound is tight.
\begin{proposition}[Binary static saturation]
\label{prop:binary-saturation}
For $\cE=\{p,\rho_0;1-p,\rho_1\}$ with $0<p<1$ and faithful average,
\begin{equation}
 \inf I(X{:}K)=B(\cE).
\label{eq:binary-static}
\end{equation}
\end{proposition}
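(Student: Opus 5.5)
The plan is to prove the two inequalities separately. The lower bound $\inf I(X{:}K)\ge B(\cE)$ is immediate from \cref{prop:BS-lower}, since every finite trajectory representation satisfies $B(\cE)\le I(X{:}K)$. The real content is the reverse inequality. I will prove it by writing down an explicit finite representation whose mutual information equals $B(\cE)$, so that the infimum is in fact attained.

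\textbf{Construction.} Let $\sigma:=\bar\rho=p\rho_0+(1-p)\rho_1>0$ and $A_0:=A_{\rho_0|\sigma}$. Since $p\rho_0\le\sigma$, we have $0\le pA_0\le\id$. Moreover
\[
A_1:=A_{\rho_1|\sigma}=\frac{\id-pA_0}{1-p}.
\]
This is the binary-specific point: $A_0$ and $A_1$ are functions of one another, so they commute. Diagonalize $A_0=\sum_i\alpha_i\ketbra{e_i}$ in an orthonormal basis $\{\ket{e_i}\}_{i=1}^d$. Then $A_1=\sum_i\beta_i\ketbra{e_i}$ with $\beta_i=(1-p\alpha_i)/(1-p)\ge0$.

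Define the unnormalized vectors $\ket{v_i}:=\sigma^{1/2}\ket{e_i}$. Set $\mu_i:=\langle e_i|\sigma|e_i\rangle>0$ and $P_i:=\ketbra{v_i}/\mu_i$. Because $\sigma^{1/2}$ is invertible, the $\ket{v_i}$ are linearly independent, so the $P_i$ are distinct rank-one projectors. Conjugating the spectral decompositions by $\sigma^{1/2}$ gives
\[
\rho_x=\sigma^{1/2}A_x\sigma^{1/2}=\sum_i q(i|x)P_i,
\qquad q(i|0)=\alpha_i\mu_i,\quad q(i|1)=\beta_i\mu_i.
\]
These are probability vectors: they are nonnegative, and their sums equal $\Tr\rho_x=1$. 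This is a valid representation in the sense of \eqref{eq:trajectory-representation}, and the marginal of $K$ is $\mu$.

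\textbf{Evaluating the information.} The likelihood ratios are $q(i|0)/\mu_i=\alpha_i$ and $q(i|1)/\mu_i=\beta_i$. Hence
\[
\DKL(q(\cdot|0)\|\mu)=\sum_i\mu_i\alpha_i\log\alpha_i=\Tr[\sigma A_0\log A_0]=\DBS(\rho_0\|\bar\rho).
\]
The middle equality holds because $A_0\log A_0$ is diagonal in the basis $\{\ket{e_i}\}$ and $\mu_i$ is the corresponding diagonal entry of $\sigma$. The same computation with $\beta_i$ gives $\DBS(\rho_1\|\bar\rho)$. Since $I(X{:}K)=\sum_xp_x\DKL(q(\cdot|x)\|\mu)$, we get $I(X{:}K)=B(\cE)$. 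Together with the lower bound, this proves \eqref{eq:binary-static}, with the infimum attained by this $d$-atom representation.

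\textbf{Main obstacle.} The only nontrivial step is spotting that the two operators $A_{\rho_x|\bar\rho}$ are simultaneously diagonalizable; that simultaneous basis is what makes the classical likelihood ratios reproduce the BS operator ratios exactly. I will also note where the argument would break for three or more states: the $A_x$ then need not commute, which is consistent with the higher-dimensional separation announced later in the paper. The remaining checks are routine:
\begin{itemize}[leftmargin=*]
\item if some $\alpha_i=0$ or $\beta_i=0$, the convention $0\log0=0$ applies on both the classical and the quantum side;
\item if $\alpha_i=\alpha_j$, the eigenbasis is not unique, but any choice works;
\item no vector $\ket{v_i}$ vanishes, so no atoms need to be merged.
\end{itemize}
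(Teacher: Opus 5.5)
Your proof is correct and follows the paper's argument essentially step for step: $A_0$ and $A_1$ commute because $pA_0+(1-p)A_1=\id$, you transport their common eigenbasis through $\bar\rho^{1/2}$ with weights $\langle u_j|\bar\rho|u_j\rangle$ so that the classical likelihood ratios equal the eigenvalues of $A_x$, and you take the lower bound from \cref{prop:BS-lower}. One caution on your side remark: the later $d\ge3$ obstruction concerns binary ensembles, for which this static saturation still holds, so it is a failure of dynamic propagation between noise levels and not a consequence of noncommuting $A_x$ for ensembles with three or more states.
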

Indeed $A_x=\bar\rho^{-1/2}\rho_x\bar\rho^{-1/2}$ obey $pA_0+(1-p)A_1=\id$, hence $A_0$ and $A_1$ commute.  If $A_x=\sum_j a_{xj}\ketbra{u_j}$ in a shared eigenbasis, then $w_j=\langle u_j|\bar\rho|u_j\rangle$ and $\ket{\phi_j}=\bar\rho^{1/2}\ket{u_j}/\sqrt{w_j}$ give coefficients $q(j|x)=w_ja_{xj}$ with $\rho_x=\sum_jq(j|x)\ketbra{\phi_j}$ and $I(X{:}J)=B(\cE)$.  Appendix~\ref{app:binary-static} gives the full calculation.

Binary saturation solves each noise level in isolation. Scheduling imposes one additional requirement: the target optimizer must be generated by a label-independent transition from the optimizer already realized. The problem is therefore one of compatibility between optimal representations at successive levels, not of static decomposition alone.

The following criterion tests that compatibility for the $d$-atom BS representations above, and more generally whenever the selected target projectors are linearly independent.
\begin{theorem}[Moving-basis criterion]
\label{thm:moving-basis}
Let $\{P_j\}_{j=1}^{d}$ and $\{Q_\ell\}_{\ell=1}^{d}$ be selected source and target BS representations, and assume that the target projectors $\{Q_\ell\}_{\ell=1}^{d}$ are linearly independent (in particular, this holds for the basis-generated representation of \cref{prop:binary-saturation}).  A CPTP map $\Phi$ propagates the selected source representation into the selected target representation iff
\begin{equation}
 \Phi(P_j)\in\conv\{Q_1,\ldots,Q_d\}\quad\forall j,
\label{eq:simplex-closure}
\end{equation}
equivalently
\begin{equation}
 \Phi(P_j)=\sum_\ell W_{\ell j}Q_\ell
\label{eq:W-closure}
\end{equation}
for a column-stochastic $W$.
\end{theorem}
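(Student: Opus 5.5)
The proof will first fix what ``propagates'' means, following \cref{def:causal-refinement}. The map $\Phi$ propagates the selected source representation $\{P_j\}$ into the selected target representation $\{Q_\ell\}$ if there is a label-independent refinement with two properties. First, for each source branch $j$ it is a pure-state law $\nu_j$ supported on $\{Q_1,\dots,Q_d\}$. Second, its barycenter is $\Phi(P_j)$. Writing $\nu_j=\sum_\ell W_{\ell j}\delta_{Q_\ell}$, condition \eqref{eq:barycenter-constraint} becomes exactly \eqref{eq:W-closure}, with $W_{\ell j}\ge0$ and $\sum_\ell W_{\ell j}=1$. This is the same as \eqref{eq:simplex-closure}. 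So the equivalence of \eqref{eq:simplex-closure} and \eqref{eq:W-closure}, together with the sufficiency direction, is essentially a rewriting of the definition.

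The nontrivial point is necessity, and I will also show the induced transition is what produces the target. Suppose $\Phi$ propagates the representation, so that averaging the transition over the source coefficients yields the prescribed target coefficients. Concretely, the composite record must satisfy $\Phi(\rho_x)=\sum_j q(j|x)\Phi(P_j)=\sum_\ell q'(\ell|x)Q_\ell$ with the target weights $q'$. Every $\nu_j$ must be supported on the selected target atoms, since the refined record $K'$ is required to be that representation, with duplicate labels merged as in \cref{def:V}. Its barycenter then lies in $\conv\{Q_\ell\}$, which gives \eqref{eq:simplex-closure}.

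Linear independence of $\{Q_\ell\}$ enters in two places.
\begin{itemize}
\item \emph{Uniqueness of weights.} The coefficients $W_{\ell j}$ in \eqref{eq:W-closure} are uniquely determined by $\Phi(P_j)$. The law $\nu_j$ is therefore forced, and the target coefficients must equal $q'(\ell|x)=\sum_j W_{\ell j}q(j|x)$, again by uniqueness of the expansion of $\Phi(\rho_x)$.
\item \emph{Consistency of the weights.} For the converse I must check that the weights produced by the simplex condition reproduce the selected target law, not just some decomposition of $\Phi(\rho_x)$. Linearity of $\Phi$ shows that $\sum_\ell\bigl(\sum_j W_{\ell j}q(j|x)\bigr)Q_\ell=\Phi(\rho_x)=\sum_\ell q'(\ell|x)Q_\ell$. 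Independence then forces equality of the coefficients, so the induced $K'$ has exactly the target conditional law. The Markov structure $X\to K\to K'$ holds because $W$ does not depend on $x$.
\end{itemize}

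Finally I will justify the parenthetical claim. In \cref{prop:binary-saturation} the atoms are $\ket{\phi_j}=\bar\rho^{1/2}\ket{u_j}/\sqrt{w_j}$, where $\{u_j\}$ is an orthonormal basis and $\bar\rho^{1/2}$ is invertible because $\bar\rho$ is faithful. The vectors $\phi_j$ therefore form a basis of $\mathbb C^d$. Rank-one projectors onto a vector basis are linearly independent as operators: apply a vanishing combination $\sum c_j\ketbra{\phi_j}$ to the dual basis vectors to get every $c_j=0$.

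The main obstacle is stating ``propagates'' precisely enough that necessity is not circular, in particular ruling out refinements with atoms outside $\{Q_\ell\}$. I will handle this by the merging convention of \cref{def:V}: the refined record is required to be the selected target representation, so all atoms are among the $Q_\ell$.
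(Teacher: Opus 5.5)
Your proposal is correct and follows essentially the same route as the paper. Necessity comes from the branch laws being supported on the target atoms, sufficiency from using the column-stochastic $W$ as a label-independent kernel, and linear independence of the $Q_\ell$ (via the dual-basis argument) forces the induced coefficients $Wq_x$ to equal the selected target coefficients for both source labels. The differences are cosmetic: the paper also invokes HJW to make the branchwise decomposition physically implementable, which the measure-level \cref{def:causal-refinement} you use does not require, and you explicitly check that $\bar\rho^{1/2}$ maps the orthonormal eigenbasis to a basis, a step the paper leaves implicit.
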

Necessity is immediate from the conditional branch probabilities.  For sufficiency, the convex decomposition of each $\Phi(P_j)$ is realizable by HJW because the source branch $j$ is already known.  Linear independence of the target projectors makes their coefficients unique, so the same column-stochastic $W$ uniquely propagates the ensemble coefficients for both source values.  The criterion therefore converts dynamic attainability into a geometric simplex-closure test.  Arbitrary, possibly degenerate, BS minimizers are handled by the equality-support argument below rather than by assuming a unique common basis.

\begin{theorem}[Exact binary-qubit information clock]
\label{thm:qubit-exact}
For every binary qubit ensemble with prior $0<p<1$, distinct states, and $0\le\lambda'<\lambda\le1$, let $\xi_\lambda$ denote the two-atom BS-optimal trajectory supported on the chord endpoints of the depolarized pair at cumulative retention $\lambda$.  Then $\xi_\lambda$ propagates under $\cD_{\lambda'/\lambda}$ to the corresponding chord-endpoint BS-optimal trajectory $\xi_{\lambda'}$.  Writing the affine Bloch line as $r=c+sv$ with $c\perp v$ and
\begin{equation}
 n_\pm(\lambda)=\lambda c\pm h_\lambda v,\qquad h_\lambda=\sqrt{1-\lambda^2\|c\|^2},
\label{eq:chord-endpoints}
\end{equation}
the transition is binary symmetric with
\begin{equation}
 W_{\lambda\to\lambda'}=\frac12\begin{pmatrix}1+\eta&1-\eta\\1-\eta&1+\eta\end{pmatrix},\qquad
 \eta=\frac{\lambda'}{\lambda}\frac{h_\lambda}{h_{\lambda'}}\in[0,1],
\label{eq:qubit-W}
\end{equation}
and
\begin{equation}
 V_{\xi_\lambda}(\lambda'/\lambda)=B(\cE_{\lambda'}).
\label{eq:V-B-qubit}
\end{equation}
\end{theorem}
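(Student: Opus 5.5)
The plan is to verify the simplex-closure condition of \cref{thm:moving-basis} directly in Bloch coordinates. Then I identify the chord-endpoint representation with the BS-optimal representation of \cref{prop:binary-saturation}, and close the argument by sandwiching $V_{\xi_\lambda}$ between the envelope of \cref{prop:BS-lower} and an explicit refinement.

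\textbf{Step 1: the depolarized line and its chord endpoints.} Write $\rho_x=\tfrac12(\id+r_x\cdot\sigma)$ with $r_x=c+s_xv$ on the affine line. Here $c$ is the foot of the perpendicular from the origin and $v$ is a unit direction; distinct states guarantee $s_0\neq s_1$. Under cumulative retention $\lambda$ the line becomes $\lambda c+s\lambda v$. Its intersection with the unit sphere gives exactly $n_\pm(\lambda)$ from \eqref{eq:chord-endpoints}, which are pure states since $\|n_\pm\|^2=\lambda^2\|c\|^2+h_\lambda^2=1$. Both depolarized states lie strictly inside this chord.

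I then check that the two-atom decomposition on $n_\pm(\lambda)$ is BS-optimal, using the construction of \cref{prop:binary-saturation}. The vectors $\bar\rho^{-1/2}\ket{\phi_\pm}$ are orthogonal in the shared eigenbasis of $A_0,A_1$. Equivalently, the rank-one operators $\bar\rho^{-1/2}\rho_x\bar\rho^{-1/2}$ are diagonalised together. For a qubit, a decomposition of both $\rho_x$ over the same two pure states $\phi_\pm$ forces $\phi_\pm$ to be the endpoints of the line through $\rho_0,\rho_1$. With coefficients $q(\pm|x)$ affine in $s_x$, this reproduces the $\{w_ja_{xj}\}$ formula. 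It therefore attains $B(\cE_\lambda)$, which is the static optimum by \cref{prop:binary-saturation}. If $\bar\rho$ fails to be faithful (only possible at $\lambda=1$ with pure states), I will treat that case by continuity or directly.

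\textbf{Step 2: closure under $\cD_{\lambda'/\lambda}$.} The channel maps Bloch vectors by multiplication by $\mu=\lambda'/\lambda$, so $\cD_\mu(n_\pm(\lambda))$ has Bloch vector $\lambda'c\pm\mu h_\lambda v$. This point lies on the $\lambda'$ line at parameter $\pm\mu h_\lambda$. Writing it as a convex combination of $n_\pm(\lambda')=\lambda'c\pm h_{\lambda'}v$ gives weights $\tfrac12(1\pm\eta)$ with $\eta=\mu h_\lambda/h_{\lambda'}$, which yields \eqref{eq:qubit-W}.

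The remaining check is $\eta\in[0,1]$, i.e.\ $\lambda'^2(1-\lambda^2\|c\|^2)\le\lambda^2(1-\lambda'^2\|c\|^2)$. This reduces to $\lambda'^2\le\lambda^2$, which holds. I also need $h_{\lambda'}>0$, which follows from $\|c\|<1$ when the states are distinct; the case $\|c\|=1$ is impossible because the line then touches the sphere at a single point. The two target projectors are linearly independent, so \cref{thm:moving-basis} applies and the column-stochastic $W$ propagates $\xi_\lambda$ to $\xi_{\lambda'}$. The endpoint case $\lambda'=0$ gives $\eta=0$, which is consistent.

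\textbf{Step 3: the value identity.} The propagated refinement is feasible in \eqref{eq:V-def}, so $V_{\xi_\lambda}(\lambda'/\lambda)\le I(X{:}K')=B(\cE_{\lambda'})$. Conversely, \cref{prop:BS-lower} gives $B(\cE_{\lambda'})\le V_{\xi_\lambda}(\lambda'/\lambda)$, and equality \eqref{eq:V-B-qubit} follows.

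The main obstacle is the identification in Step 1: rigorously matching the chord-endpoint decomposition with the BS-attaining construction of \cref{prop:binary-saturation}, including uniqueness of the shared decomposition and degenerate cases (e.g.\ $c=0$, or $A_x$ with repeated eigenvalues). If that identification is awkward, my fallback is to compute $I(X{:}K)$ for the chord decomposition directly and compare it with $\DBS$ in closed form for qubits.
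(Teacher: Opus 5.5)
Your proposal is correct and follows essentially the same route as the paper. You identify the common-basis BS optimum of \cref{prop:binary-saturation} with the chord endpoints, compute $\eta$ and reduce $\eta\le1$ to $\lambda'^2\le\lambda^2$, invoke \cref{thm:moving-basis}, and sandwich with \cref{prop:BS-lower}.

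Most of the degenerate cases you flag do not arise. For distinct qubit states with $0<p<1$ the average is always faithful, even at $\lambda=1$ with pure states, because a kernel vector of $\bar\rho$ would force both states onto one pure state. Likewise $A_0\propto\id$ would force $\rho_0=\rho_1$. The one case the paper treats explicitly is $\lambda'=0$: there the BS optimum is nonunique, and one selects the inherited endpoints $n_\pm(0)=\pm v$.
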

In the Bloch ball, the geometry is explicit.  Depolarization scales every Bloch vector, so the noisy pair stays on the scaled affine line $\lambda c+sv$.  The only pure states on that line are its two sphere intersections $n_\pm(\lambda)$.  After a step $a=\lambda'/\lambda$, each old endpoint lies on the target chord with the binary-symmetric weights above; checking $\eta\le1$ reduces exactly to $\lambda'^2\le\lambda^2$.  Thus the static BS optimum is dynamically feasible along the entire qubit path (Appendix~\ref{app:qubit}).

\begin{theorem}[Higher-dimensional obstruction]
\label{thm:highd-nogo}
Let $d\ge3$, let the binary prior satisfy $0<p<1$, and let $\rho_0=\ketbra{\psi_0}$ and $\rho_1=\ketbra{\psi_1}$ be distinct nonorthogonal pure states.  For every $0<\lambda'<\lambda<1$, no BS-optimal finite trajectory of
$\rho_x^\lambda=\lambda\rho_x+(1-\lambda)\id/d$ can be propagated by $\cD_{\lambda'/\lambda}$ into a BS-optimal trajectory at $\lambda'$.  Hence every finite BS-optimal source trajectory satisfies
\begin{equation}
 V_{\xi_\lambda}(\lambda'/\lambda)>B(\cE_{\lambda'}).
\label{eq:highd-strict}
\end{equation}
\end{theorem}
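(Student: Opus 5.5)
The plan is to first characterize \emph{all} finite BS-optimal trajectories at a given interior noise level. The characterization is rigid: inside the two-dimensional span of the states it consists of two fixed atoms, and these atoms are nonorthogonal. I would then show that the depolarized image of such an atom can never be decomposed into the atoms allowed at the target level.

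\textbf{Step 1 (block structure).} Let $S=\linspan\{\ket{\psi_0},\ket{\psi_1}\}$, which has dimension $2$, and note that $\dim S^\perp=d-2\ge1$. Then $\bar\rho^\lambda=\lambda\bar\rho+(1-\lambda)\id/d$ is faithful, and it is block diagonal for $S\oplus S^\perp$. On $S^\perp$ we have $\rho_x^\lambda=\bar\rho^\lambda=(1-\lambda)\id/d$, so $A_x=\id$ there. On $S$ we have
\[
A_0-\id=(1-p)\lambda\,(\bar\rho^\lambda)^{-1/2}(\ketbra{\psi_0}-\ketbra{\psi_1})(\bar\rho^\lambda)^{-1/2}.
\]
Because $\ketbra{\psi_0}-\ketbra{\psi_1}$ has rank two with eigenvalues of opposite sign, $A_0|_S$ has two eigenvalues different from $1$ and from each other; by $pA_0+(1-p)A_1=\id$ the same holds for $A_1|_S$. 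Let $u_1,u_2$ be the shared orthonormal eigenbasis of $S$.

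\textbf{Step 2 (equality support).} Next I use the equality case of the operator-Jensen/data-processing argument behind \cref{prop:BS-lower}. Since $t\log t$ is strictly convex, equality $I(X{:}K)=B$ forces each atom $Q_k$ to satisfy one condition: the vector $(\bar\rho^\lambda)^{-1/2}Q_k(\bar\rho^\lambda)^{-1/2}$ must be supported in a joint eigenspace of $(A_0,A_1)$, with eigenvalues equal to the likelihood ratios $q(k|x)/q(k)$. I regard this as the main obstacle and would prove it carefully, via the reverse-test/Stinespring form of the maximal $f$-divergence and the strict-convexity equality condition.

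By Step 1 the joint eigenspaces are $\mathbb C u_1$, $\mathbb C u_2$ and $S^\perp$. Hence every finite BS-optimal trajectory at level $\lambda$ has atoms of only three kinds: $\phi_j(\lambda)\propto(\bar\rho^\lambda)^{1/2}u_j$ for $j=1,2$, which lie in $S$ because $\bar\rho^\lambda$ preserves $S$, together with arbitrary pure states in $S^\perp$. Moreover,
\[
\langle\phi_1|\phi_2\rangle\propto\langle u_1|\bar\rho^\lambda|u_2\rangle .
\]
This overlap vanishes iff $\bar\rho$ commutes with $A_0$, hence iff $\rho_0$ commutes with $\bar\rho$, hence iff $\ketbra{\psi_0}$ and $\ketbra{\psi_1}$ commute. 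That is impossible for distinct nonorthogonal pure states, so $\phi_1(\lambda)\not\perp\phi_2(\lambda)$ for every $0<\lambda<1$.

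\textbf{Step 3 (no propagation).} Set $a=\lambda'/\lambda\in(0,1)$. Suppose a refinement maps a BS-optimal trajectory at $\lambda$ to one at $\lambda'$. By \cref{def:causal-refinement}, $\cD_a(\phi_j(\lambda))$ would then be a convex combination of target atoms. I compress to $S$: atoms in $S^\perp$ contribute nothing to the $S$-block, so
\[
a\,\phi_j(\lambda)+\tfrac{1-a}{d}\Pi_S\in\operatorname{cone}\{\phi_1(\lambda'),\phi_2(\lambda')\},\qquad j=1,2.
\]
The $\phi_j(\lambda)$ are linearly independent, so the two-dimensional real span of $\phi_1(\lambda'),\phi_2(\lambda')$ must contain $\phi_1(\lambda)$, $\phi_2(\lambda)$ and $\Pi_S$. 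In Bloch language, the span of two rank-one qubit projectors contains $\Pi_S$ only if their Bloch vectors are antipodal, i.e.\ only if the projectors are orthogonal. This contradicts Step 2 at level $\lambda'$.

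\textbf{Step 4 (strict gap).} Finally, by \cref{thm:V-properties}(i) the value $V_{\xi_\lambda}(a)$ is attained by some channel-compatible refinement. Merging duplicate labels, its output is a finite trajectory of $\cE_{\lambda'}$. If $V_{\xi_\lambda}(a)=B(\cE_{\lambda'})$, this output would be BS-optimal, contradicting Step 3. Combining this with \cref{prop:BS-lower} gives the strict inequality \eqref{eq:highd-strict}.
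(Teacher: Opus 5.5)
\paragraph{Verdict.} Step 3 does not prove what it claims. Steps 1, 2 and 4 are sound. Your Step 1 inertia argument is tidier than the paper's. Step 2's equality-support claim is exactly \cref{lem:BS-equality-support}. The nonorthogonality $\phi_1(\lambda)\not\perp\phi_2(\lambda)$ is the same noncommutativity the paper uses at the end.

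\paragraph{The gap in Step 3.} The two cone memberships only tell you that $v_j:=a\phi_j(\lambda)+\frac{1-a}{d}\Pi_S$, for $j=1,2$, lie in $W=\linspan_{\mathbb R}\{\phi_1(\lambda'),\phi_2(\lambda')\}$. Nothing lets you separate $\Pi_S$ from $\phi_j(\lambda)$. In fact the opposite holds. Because $\phi_1(\lambda)$ and $\phi_2(\lambda)$ are distinct and nonorthogonal, the operators $\phi_1(\lambda),\phi_2(\lambda),\Pi_S$ are linearly independent. Hence $W=\linspan\{v_1,v_2\}$ and $\Pi_S\notin W$. A sanity check confirms the inference is invalid: Step 3 never uses $S^\perp$ or $d\ge3$. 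For $d=2$, where $\Pi_S=\id$, the memberships you call impossible actually hold, with nonorthogonal chord endpoints, by \cref{thm:qubit-exact}.

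\paragraph{The fix (the paper's route).} Propagate a source atom in $S^\perp$ instead of the atoms in $S$. Since $\bar\rho^\lambda|_{S^\perp}=\frac{1-\lambda}{d}\id_{S^\perp}\neq0$, and atoms in $S$ contribute nothing to that block, every BS-optimal source trajectory has a positive-mass atom $\ketbra{u}$ with $u\in S^\perp$. This is where $d\ge3$ enters. The $S$-block of $\cD_a(\ketbra{u})$ is exactly $\frac{1-a}{d}\Pi_S$, so $\Pi_S\in\operatorname{cone}\{\phi_1(\lambda'),\phi_2(\lambda')\}$. Your antipodal Bloch-vector argument then forces $\phi_1(\lambda')\perp\phi_2(\lambda')$, contradicting Step 2 at level $\lambda'$.

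\paragraph{Keeping the $S$-atoms instead.} This would require a quantitative computation. Write $\delta=2/d$.
\begin{itemize}
\item The normalized $S$-blocks of $\rho_x^\lambda$ are qubit-depolarized states with effective retention $\tilde\lambda=\lambda/(\delta+(1-\delta)\lambda)$.
\item So the BS atoms in $S$ are the chord endpoints of \cref{thm:qubit-exact} at $\tilde\lambda$.
\item The normalized $S$-block of $\cD_a(\phi_j(\lambda))$ has its Bloch vector scaled by $s=a/(\delta+(1-\delta)a)$.
\item The shrunk source chord can lie on the target chord only if $s\tilde\lambda=\tilde\lambda'$.
\item One checks $(\delta+(1-\delta)a)(\delta+(1-\delta)\lambda)-(\delta+(1-\delta)\lambda')=-\delta(1-\delta)(1-a)(1-\lambda)$.
\end{itemize}
This shows the condition fails exactly when $d\ge3$. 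The $S^\perp$ atom avoids this computation.
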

\paragraph{Proof idea.}
Let $S=\linspan\{\psi_0,\psi_1\}$.  On $S^\perp$, the two depolarized states and their average coincide, so the normalized likelihood operator has eigenvalue $1$ there; on $S$ it has two distinct one-dimensional eigenspaces and no eigenvalue $1$.  This spectral split constrains \emph{every} BS minimizer, not only the canonical common-basis construction.  Equality in the BS lower bound makes operator Jensen tight for $f(t)=t\log t$; strict convexity then forces every finite minimizing atom, after normalization by the average state, to lie inside one eigenspace of the normalized likelihood operator.  Hence every BS-optimal trajectory is supported only on the two allowed directions in $S$ together with $S^\perp$.

Because the average state is strictly positive on $S^\perp$, a source optimizer must contain a positive-mass atom there.  Depolarizing this branch creates the isotropic block $(1-a)\id_S/d$ on the data subspace.  A target BS optimizer has only two allowed directions in $S$, so reproducing this rank-two isotropic block from those directions forces them to be orthogonal with equal weights.  But then every target atom lies either in one common orthogonal basis of $S$ or in $S^\perp$, which makes both target density operators diagonal in the same block basis and hence commuting.  This contradicts
$[\rho_0^{\lambda'},\rho_1^{\lambda'}]=\lambda'^2[\rho_0,\rho_1]\ne0$.  The obstruction is genuinely dynamic: the density-level optimum exists at both endpoints, yet no optimal trajectory at the first endpoint can reach any optimal trajectory at the second.  Appendix~\ref{app:BS-equality} proves the equality-support lemma and Appendix~\ref{app:highd} gives the complete argument.

The BS clock is therefore channel-exact on the binary-qubit chord trajectory and strictly optimistic for the higher-dimensional family. If $\Gamma_t:=I(X{:}K_t)-B(\cE_t)$ denotes the realization gap, then
$c_t=(B_{t-1}-B_t)+(\Gamma_{t-1}-\Gamma_t)$, making explicit how trajectory compatibility modifies density-level equalization. The separation is set-level: the equality-support lemma characterizes every finite BS minimizer, and the forward channel cannot connect the optimal sets across interior noise levels. Structurally, Bloch-chord geometry closes the qubit optimum, whereas the additional orthogonal complement from $d=3$ onward creates an incompatible isotropic block (Appendix~\ref{app:additional}).

\section{Finite optimization and certification}
\label{sec:computation}
Let $\pi_k=P(K=k)$, $\eta_x^k=P(X=x|K=k)$, and
\begin{equation}
 g(\alpha)=\DKL\!\left(\sum_k\alpha_k\eta^k\middle\|p\right),\qquad \alpha\in\Delta_m.
\label{eq:g-alpha}
\end{equation}
\begin{theorem}[Finite-support primal]
\label{thm:finite-primal}
For $M_k(a)=\pi_k\cD_a(P_k)$,
\begin{equation}
 V_\xi(a)=\min_{\tau_\ell,\alpha_\ell,Q_\ell}\sum_\ell\tau_\ell g(\alpha_\ell)
\label{eq:finite-primal}
\end{equation}
subject to
\begin{equation}
 \sum_\ell\tau_\ell\alpha_{\ell k}Q_\ell=M_k(a)\ \forall k,
 \quad \tau_\ell\ge0,\ \alpha_\ell\in\Delta_m,\ Q_\ell\ \text{pure},
\label{eq:finite-primal-constraints}
\end{equation}
and an optimum uses at most $md^2+2$ atoms.
\end{theorem}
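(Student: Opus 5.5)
The plan is to rewrite $I(X{:}K')$ as a barycentric average of KL divergences and then pool atoms by output projector. Fix $a$ and any refinement $\{\nu_k\}$ feasible for \eqref{eq:V-def}. Since $X\to K\to K'$, the joint law is $P(x,k,\dd Q)=p_x q(k|x)\,\nu_k(\dd Q)$. Let $\mu(\dd Q)=\sum_k\pi_k\nu_k(\dd Q)$ be the output marginal. Let $\alpha_k(Q)=\pi_k\,\dd\nu_k/\dd\mu(Q)$ be the posterior of $K$ given $K'=Q$; this lies in $\Delta_m$ $\mu$-a.e. Then the posterior of $X$ given $K'=Q$ is $\sum_k\alpha_k(Q)\eta^k$. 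Hence
\[
I(X{:}K')=\int \DKL\Bigl(\textstyle\sum_k\alpha_k(Q)\eta^k\Bigm\|p\Bigr)\mu(\dd Q)=\int g(\alpha(Q))\,\mu(\dd Q).
\]
The barycenter constraints become $\int \alpha_k(Q)\,Q\,\mu(\dd Q)=\pi_k\cD_a(P_k)=M_k(a)$. For finitely supported $\mu=\sum_\ell\tau_\ell\delta_{Q_\ell}$ this is exactly \eqref{eq:finite-primal} with constraints \eqref{eq:finite-primal-constraints}. Conversely, any feasible finite tuple $(\tau_\ell,\alpha_\ell,Q_\ell)$ defines the refinement $\nu_k=\pi_k^{-1}\sum_\ell\tau_\ell\alpha_{\ell k}\delta_{Q_\ell}$, which satisfies \eqref{eq:barycenter-constraint} and has the same objective value. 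Finite programs therefore give upper bounds on $V_\xi(a)$, and general refinements reduce to the lifted measure form.

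To show the minimum is attained on finite support, I will work with the lifted measure $\mu$ on the compact set $\mathcal P\times\Delta_m$, where $\mathcal P$ is the set of pure states. I take $\gamma(\dd Q,\dd\alpha)$ as the pushforward of $\mu$ under $Q\mapsto(Q,\alpha(Q))$. The objective $\int g\,\dd\gamma$ is linear in $\gamma$. The constraints $\int\alpha_kQ\,\dd\gamma=M_k(a)$ are $m$ Hermitian-matrix equations, which amounts to $md^2$ real linear conditions. Total mass $1$ is implied by taking the trace and summing over $k$. The feasible set of probability measures $\gamma$ is then nonempty and weak-* compact. It is nonempty because it contains the measure induced by any HJW decomposition of each $\cD_a(P_k)$. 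It is closed because the constraint functions are continuous on a compact space. Since $g$ is continuous and bounded on $\Delta_m$, a minimizer $\gamma^\star$ exists. Pooling atoms that share the same $Q$ with $\alpha$ averaged by weight can only lower the objective, because $g$ is convex; this is the same merging remark as in \cref{def:V}. So minimizing over $\gamma$ agrees with $V_\xi(a)$ as defined over refinements, and part (i) of \cref{thm:V-properties} is recovered as well.

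The cardinality bound comes from Carathéodory applied to the moment map. Consider $(Q,\alpha)\mapsto\bigl(\{\alpha_kQ\}_{k},\,g(\alpha)\bigr)$, which takes values in $\mathbb R^{md^2}\times\mathbb R$. The moment vector of $\gamma^\star$ lies in the convex hull of the compact image. That image lives in an affine space of dimension at most $md^2+1$, so Carathéodory gives a representation with at most $md^2+2$ atoms. This finite measure has the same constraint moments and the same objective value, so it is optimal. The trace relation would allow one fewer atom, but the stated bound $md^2+2$ suffices.

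The main obstacle is the first step, the exact equivalence between arbitrary (possibly non-atomic) refinements and the lifted measure formulation. It needs care with Radon--Nikodym densities $\dd\nu_k/\dd\mu$ and with expressing mutual information for a general output space as the average posterior divergence. I would handle this through the chain $X\to K\to K'$ and the standard disintegration formula $I(X{:}K')=\mathbb E\,\DKL(P_{X|K'}\|P_X)$, which holds on standard Borel spaces.
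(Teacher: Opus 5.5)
Your proposal is correct and follows essentially the same route as the paper's proof. The shared steps are: the Radon--Nikodym posterior $\alpha(Q)$ giving $I(X{:}K')=\int g(\alpha(Q))\,\mu(\dd Q)$; the lifted linear program over probability measures on the compact space $\mathsf P_d\times\Delta_m$, with existence by compactness; convexity of $g$ to project lifted measures back to refinements; and Carath\'eodory applied to $(\{\alpha_kQ\}_k,g(\alpha))$ for the $md^2+2$ bound.

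One point to tighten: for a non-atomic $\gamma^\star$, your ``pooling'' step is really conditional Jensen with $\bar\alpha(Q)=\mathbb E_{\gamma^\star}[\alpha\mid Q]$, which is what the paper does. Alternatively, you can avoid disintegration by applying Carath\'eodory first and then pooling the resulting finitely many atoms.

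Your remark that the trace hyperplane $\sum_k\Tr(\alpha_kQ)=1$ saves one atom is correct.
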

The lifted moment problem is equivalent to the original measure optimization and admits a no-gap semi-infinite operator dual (Appendix~\ref{app:dual}).  In compact form, with Hermitian multipliers $Y_k$ and scalar $\beta$, dual feasibility is
\begin{equation}
 \beta+\lambda_{\max}\!\left(\sum_k\alpha_kY_k\right)\le g(\alpha)
 \qquad\forall\alpha\in\Delta_m,
\end{equation}
and every feasible dual point gives the certified lower bound
\begin{equation}
 \beta+\sum_k\Tr[Y_kM_k(a)]\le V_\xi(a).
\end{equation}
Finite-library primal solutions give constructive upper approximations, while feasible dual points certify lower bounds. Schedule construction is a one-dimensional outer inversion in $a$ around a nontrivial pure-state optimization. The primal/dual pair therefore provides a principled route to progressively tighten construction and certification, especially in higher dimension where the BS density-level curve is only a static lower envelope.

\section{Numerical validation}
\label{sec:numerics}
We evaluate the proposed scheduler at three levels: whether the reachable information clock can be constructed accurately, whether recursive inversion changes the realized multi-step reverse-information allocation relative to density-static scheduling, and whether the resulting ordering is visible to learned reverse predictors. The qubit case is an analytically exact control; noncommuting qutrits provide the smallest setting in which the channel constraint is active. Appendix~\ref{app:numerics} contains the complete grids, solver diagnostics, trajectory checks, and training protocol.

For an equal-prior binary qubit with $\ket{\psi_0}=\ket0$, $\ket{\psi_1}=\cos(\pi/3)\ket0+\sin(\pi/3)\ket1$, source retention $0.9$, and nine target retentions, a restricted library containing the exact target BS atoms gives
\begin{equation}
 \max_{\lambda'}|\widehat V-B|=3.53\times10^{-9}\ \text{nats},
\label{eq:num-qubit-gap}
\end{equation}
recovering the analytic clock to numerical precision.

\paragraph{One-step channel-constrained correction.}
For $d=3$, all 12 settings in our qutrit grid fail BS-basis closure. At the representative setting $(\theta,\lambda,r)=(\pi/4,0.85,0.7)$, the independently optimized density-static target has $B_{\rm tgt}=0.141985$ but is not reachable through the prescribed depolarizing step: its maximum source-atom closure residual is $6.62\times10^{-2}$. A restricted primal instead constructs a source-blind channel-compatible refinement with
\begin{equation}
 \widehat V=0.143885,\qquad \widehat V-B_{\rm tgt}=1.90069\times10^{-3},
\label{eq:num-qutrit-example}
\end{equation}
where $\widehat V$ is the information of the explicit finite-library refinement. Its maximum branchwise barycenter residual is $4.07\times10^{-9}$. The corresponding reverse burden is $0.223874$ nats, exactly matching an independent Bayes cross-entropy evaluation of the realized transition. The density-static counterfactual gives $0.225775$ nats. \Cref{tab:qutrit-causal-correction} summarizes the one-step effect; the strict inequality itself is supplied analytically by \cref{thm:highd-nogo}.

\begin{table}[t]
\centering
\caption{One-step channel-constrained correction in the noncommuting qutrit example. The density-static optimum is not closed under the prescribed forward step; the second column is an explicit feasible refinement.}
\label{tab:qutrit-causal-correction}
\small
\begin{tabular}{lrr}
\toprule
Quantity & Density-static & Feasible causal \\
\midrule
Target information & $0.141985$ & $0.143885$ \\
Reverse decrement & $0.225775$ & $0.223874$ \\
Feasibility residual & $6.62\times10^{-2}$ & $4.07\times10^{-9}$ \\
Bayes CE gap & unattainable target & $0.223874$ \\
\bottomrule
\end{tabular}
\end{table}

\paragraph{Multi-step scheduling.}
We then propagate the same noncommuting qutrit family for $T=4$ steps from cumulative retention $\lambda_0=0.85$. The density-static schedule chooses retentions that make $B(\cE_t)$ linear in $t$, while every transition is generated from the representation actually realized at the previous step. The channel-constrained schedule re-optimizes a feasible source-blind refinement at each step to follow a linear grid in $J_t=I(X{:}K_t)$. We repeat the construction with $128$, $256$, and $512$ fixed Haar atoms added to the structured library, independently optimizing the channel-constrained path at each library size.

\begin{table}[t]
\centering
\caption{Four-step qutrit scheduling under independently optimized finite libraries. Density-static scheduling equalizes $B$; channel-constrained scheduling equalizes the realized information trajectory. Lower CV and $\max_t c_t$ are better.}
\label{tab:qutrit-multistep}
\scriptsize
\begin{tabular}{@{}crrrrrr@{}}
\toprule
$N$ & CV$_{\rm static}$ & CV$_{\rm causal}$ & CV red. & $\max c_t^{\rm static}$ & $\max c_t^{\rm causal}$ & max red. \\
\midrule
128 & $0.3076$ & $\mathbf{0.0007}$ & $99.8\%$ & $0.1241$ & $\mathbf{0.0920}$ & $25.8\%$ \\
256 & $0.2473$ & $\mathbf{0.0016}$ & $99.4\%$ & $0.1251$ & $\mathbf{0.0921}$ & $26.3\%$ \\
512 & $0.2404$ & $\mathbf{0.0064}$ & $97.3\%$ & $0.1125$ & $\mathbf{0.0928}$ & $17.5\%$ \\
\bottomrule
\end{tabular}
\end{table}

\Cref{tab:qutrit-multistep} exhibits the scheduling consequence of the realizability gap. Equal decrements of the density-static clock leave a strongly nonuniform realized burden, whereas channel-constrained inversion makes the realized decrements nearly constant and lowers the hardest step at every library size. The optimized retention parameters vary with the finite library, but the information allocation selected by the channel-constrained construction is stable.

\paragraph{Reverse learning on the qutrit trajectories.}
To test whether this geometry is visible to a learned reverse model, we freeze the independently optimized $N=256$ and $N=512$ trajectories and train paired timestep-conditioned predictors $Q_\theta(K_{t-1}\mid K_t,t)$ and $Q_\phi(K_{t-1}\mid K_t,t,X)$ from sampled state-resolved records. Their held-out cross-entropy difference
\begin{equation}
 \widehat c_t:=\mathrm{CE}_{\mathrm{noX},t}-\mathrm{CE}_{\mathrm{withX},t}
\label{eq:learned-burden}
\end{equation}
estimates the exact burden $c_t$. Model selection uses validation cross entropy only; the trajectories, schedules, and exact burdens are fixed before training.

\begin{table}[t]
\centering
\caption{Reverse learning on the four-step qutrit trajectories. Learned quantities are means over ten fixed seeds. ``Dir.'' counts seeds in which channel-constrained scheduling improves both CV and the worst step.}
\label{tab:qutrit-learned}
\scriptsize
\begin{tabular}{@{}crrrrrc@{}}
\toprule
$N$ & exact CV red. & learned CV red. & exact max red. & learned max red. & MAE & Dir. \\
\midrule
256 & $99.4\%$ & $\mathbf{95.2\%}$ & $26.3\%$ & $\mathbf{25.4\%}$ & $9.72\!\times\!10^{-4}$ & $10/10$ \\
512 & $97.3\%$ & $\mathbf{92.0\%}$ & $17.5\%$ & $\mathbf{16.3\%}$ & $1.36\!\times\!10^{-3}$ & $10/10$ \\
\bottomrule
\end{tabular}
\end{table}

The learned predictors reproduce both the flattening and the hardest-step reduction with mean burden MAE $1.17\times10^{-3}$ nats; all ten seeds agree on both comparisons. A learned categorical table and empirical conditional-frequency estimator recover the same ordering (Appendix~\ref{app:numerics}).

Full reverse rollouts reveal a second trajectory effect: error propagation. At $N=512$, the channel-constrained path lowers endpoint trace distance from $0.01253$ to $0.00804$ despite larger local model mismatch; single-step hybrids trace the gain to stronger contraction of the dominant $t=4$ and $t=2$ perturbations ($0.00379$ vs. $0.00691$ endpoint error for isolated $t=4$). At $N=256$, larger local mismatch dominates, yielding $0.00921$ vs. $0.00753$. These regimes separate the exact information-allocation objective from finite-model approximation and multi-step propagation (Appendix~\ref{app:numerics}).

The binary-qubit experiment provides an analytically solvable control where the BS clock is already channel-exact. On an eight-step trajectory, equal-information scheduling attains the exact minimax burden, while linear and cosine retention grids concentrate substantially more burden in their hardest steps. Learned predictors recover the analytic profile to roughly $10^{-3}$ nats; full results are in Appendix~\ref{app:numerics}.
\section{Discussion and conclusion}
We introduced channel-constrained information scheduling for trajectory-resolved quantum diffusion. Its central object, the reachable value $V_\xi$, optimizes each information step over source-blind refinements generated from the current state-resolved record. Inverting $V_\xi$ is the scheduling rule: it realizes prescribed information levels with maximal retention, and a linear information grid gives the minimax allocation of the source-information burden in Bayes reverse prediction.

The static--dynamic theory characterizes the geometry of the method. BS information is a universal density-level lower envelope; binary-qubit depolarizing trajectories attain it along a closed optimal chord trajectory, while every $d\ge3$ admits binary noncommuting families whose BS-optimal sets are not channel-connected. Commuting ensembles remain channel-exact in arbitrary dimension (Appendix~\ref{app:additional}). These results delineate precisely when channel-constrained scheduling coincides with, or departs from, density-only information allocation.

The qutrit experiments realize the intended multi-step correction: recursive inversion nearly equalizes the information decrements, reduces the hardest reverse-information step, and produces the same ordering in learned reverse predictors. Full-rollout diagnostics expose a complementary trajectory effect: finite-model endpoint error depends not only on local reverse approximation but also on how perturbations propagate through the learned chain. Controlled hybrids isolate this propagation sensitivity, clarifying the relation between the exact scheduling objective and learned rollout behavior (Appendix~\ref{app:numerics}).

The definition of $V_\xi$, the reverse-information identity, and the finite moment/dual formulations extend to arbitrary CPTP maps. The inversion theorem uses the ordering, composition, continuity, and fully mixing endpoint of depolarization, providing a direct template for other ordered noise families. Fixed microscopic instruments fit the same framework after restricting the feasible refinement set.

\medskip
\paragraph{Reproducibility.}
The Appendix contains complete proofs, solver diagnostics, qutrit trajectory checks, finite-library sweeps, reverse-learning details, and the analytic qubit control. All restricted-primal values are reported as feasible upper approximations, and the learning experiments use fixed seeds and frozen trajectories.

\subsection*{AI use statement}
Large language models were used as general-purpose research assistance for language polishing, identifying potentially relevant related work, and generating portions of the experimental code. References surfaced with AI assistance were checked by the authors against the original sources before citation. AI-generated code used in the reported experiments was manually reviewed, tested, and checked against the declared experimental protocol before execution. Reported numerical results were obtained by executing the reviewed code rather than by asking a language model to generate result values. The authors take full responsibility for the paper's claims, citations, code, and experimental results.

\bibliographystyle{plainnat}
\bibliography{refs_causal_qdm}

\clearpage
\appendix

\section{Probability-measure conventions and causal refinements}
\label{app:measure}
Let $\mathsf P_d$ denote the compact set of rank-one projectors on the $d$-dimensional Hilbert space, identified with complex projective space.  Probability measures on $\mathsf P_d$ are equipped with the weak topology.  The barycenter map
\begin{equation}
\Lambda(\mu):=\int_{\mathsf P_d}Q\,\mu(\dd Q)
\end{equation}
is continuous because all matrix entries of $Q$ are bounded continuous functions on $\mathsf P_d$.

Given a finite current record $K\in[m]$ with $P(K=k)=\pi_k>0$, a causal refinement is a family $\{\nu_k\}_{k=1}^{m}$ of probability measures on $\mathsf P_d$ such that
\begin{equation}
\Lambda(\nu_k)=\Phi(P_k).
\end{equation}
The joint law is
\begin{equation}
P(X=x,K=k,Q\in A)=p_x q(k|x)\nu_k(A).
\end{equation}
Because the kernel from $K$ to $Q$ does not depend on $X$, the Markov relation $X\to K\to Q$ holds.  Averaging over $k$ gives
\begin{equation}
\mathbb E[Q|X=x]
=\sum_kq(k|x)\Phi(P_k)
=\Phi(\rho_x).
\end{equation}
Thus every feasible refinement is an exact trajectory-level realization of the prescribed forward quantum channel on the current ensemble.

For each known pure input $P_k$, a Stinespring dilation of $\Phi$ yields a purification of $\Phi(P_k)$.  Any finite ensemble decomposition of that output can be generated by a measurement on the purifying environment by the HJW theorem \citep{hughston1993classification}.  Since the controller knows the current state-resolved record $k$, the environment measurement may be selected adaptively as a function of $k$.  The finite-support theorem shows that such finite decompositions suffice for the optimal constructions.  This gives the operational interpretation used throughout the paper: channel compatibility is imposed branch by branch, conditioned only on the state-resolved trajectory information already available.

\section{Proofs for the reverse information burden}
\label{app:reverse}
\begin{proof}[Proof of \cref{prop:reverse-burden}]
The Markov property $X\to K_{t-1}\to K_t$ implies
\begin{equation}
I(X{:}K_t|K_{t-1})=0.
\end{equation}
Applying the chain rule to $I(X{:}K_{t-1},K_t)$ in two orders gives
\begin{align}
I(X{:}K_{t-1},K_t)
&=I(X{:}K_{t-1}),\\
&=I(X{:}K_t)+I(X{:}K_{t-1}|K_t),
\end{align}
which proves the final equality in \eqref{eq:ct-info-drop}.

The conditional mutual information has the standard posterior-divergence form
\begin{align}
I(X{:}K_{t-1}|K_t)
&=\sum_{x,j}P(x,j)
\DKL\!\left(
P(K_{t-1}|x,j)
\middle\|
P(K_{t-1}|j)
\right),
\end{align}
which is \eqref{eq:ct-def}.

For population cross-entropy reverse training, the exact decomposition is
\begin{equation}
 \mathcal L_t(\theta)=H(K_{t-1}|K_t,X)+c_t+\mathbb E_{K_t}\DKL(R_t\|Q_{\theta,t}).
\label{eq:ce-decomp}
\end{equation}
Indeed, conditioning on $K_t=j$ gives
\begin{align}
\mathbb E[-\log Q_{\theta,t}(K_{t-1}|K_t)]
&=H(K_{t-1}|K_t)
+\mathbb E_{K_t}\DKL(R_t\|Q_{\theta,t}).
\end{align}
Finally,
\begin{equation}
H(K_{t-1}|K_t)
=H(K_{t-1}|K_t,X)+I(X{:}K_{t-1}|K_t),
\end{equation}
and substituting the first part of the proof yields \eqref{eq:ce-decomp}.
\end{proof}

\section{Properties of the channel-constrained information value}
\label{app:V-properties}
We prove \cref{thm:V-properties,thm:least-noise-step,cor:equal-burden,thm:allocation-optimality}.  Throughout, $\xi$ has finitely many distinct pure states $P_1,\ldots,P_m$ and source probabilities $\pi_k>0$.

\subsection{Existence of an optimizer}
For each $k$, define
\begin{equation}
\mathcal R_k(a)
:=\{\nu\in\mathcal P(\mathsf P_d):\Lambda(\nu)=\cD_a(P_k)\}.
\end{equation}
The space $\mathcal P(\mathsf P_d)$ is compact in the weak topology because $\mathsf P_d$ is compact.  Each $\mathcal R_k(a)$ is closed under weak convergence by continuity of the barycenter map, hence compact.  Their finite product is compact.

A family $(\nu_1,\ldots,\nu_m)$ induces joint measures $P_{XQ}$ and $P_XP_Q$ continuously under weak convergence.  Mutual information can be written as
\begin{equation}
I(X{:}Q)=\DKL(P_{XQ}\|P_XP_Q).
\end{equation}
Relative entropy is lower semicontinuous under weak convergence on the compact output space with finite $X$.  Therefore $I(X{:}Q)$ is lower semicontinuous on the compact feasible product, and the infimum is attained.

\subsection{Monotonicity}
Take $0\le a_1\le a_2\le1$.  The case $a_2=0$ is trivial.  Otherwise
\begin{equation}
\cD_{a_1}=\cD_{a_1/a_2}\circ\cD_{a_2}.
\end{equation}
Start from any causal refinement of $\cD_{a_2}$.  Conditional on its pure output $Q$, set $b=a_1/a_2$ and apply the explicit measurable keep-or-reset kernel
\begin{equation}
 \kappa_b(\dd Q'|Q)=b\,\delta_Q(\dd Q')+\frac{1-b}{d}\sum_{r=1}^{d}\delta_{E_r}(\dd Q'),
\end{equation}
where $\{E_r\}$ is any fixed rank-one orthonormal-basis decomposition of $\id/d$.  Its barycenter is $\cD_b(Q)$, so the composed refinement is feasible for $a_1$ and is a Markov post-processing of the $a_2$ output.  Data processing therefore gives
\begin{equation}
V_\xi(a_1)\le V_\xi(a_2).
\end{equation}

\subsection{Convexity}
Let $a=\theta a_1+(1-\theta)a_2$ with $0\le\theta\le1$.  For $i\in\{1,2\}$ choose $\varepsilon$-optimal refinements $\nu_k^{(i)}$.  The mixture
\begin{equation}
\nu_k=\theta\nu_k^{(1)}+(1-\theta)\nu_k^{(2)}
\end{equation}
is feasible because
\begin{align}
\Lambda(\nu_k)
&=\theta\cD_{a_1}(P_k)+(1-\theta)\cD_{a_2}(P_k)\\
&=\cD_a(P_k).
\end{align}
For fixed input distribution, mutual information is convex in the conditional output law \citep{cover2006elements}.  Hence
\begin{equation}
V_\xi(a)
\le\theta V_\xi(a_1)+(1-\theta)V_\xi(a_2)+\varepsilon.
\end{equation}
Letting $\varepsilon\downarrow0$ proves convexity.

\subsection{Endpoint values and the keep-or-reset bound}
At $a=0$, every input is mapped to $\id/d$.  Choose the same pure-state decomposition of $\id/d$ for every $k$.  The output is independent of $K$ and therefore of $X$, so $V_\xi(0)=0$.

At $a=1$, the barycenter constraint for branch $k$ is
\begin{equation}
\int Q\,\nu_k(\dd Q)=P_k.
\end{equation}
Since a pure state is an extreme point of the convex state space, $\nu_k=\delta_{P_k}$.  The $P_k$ are distinct, so the output record is equivalent to $K$ and $V_\xi(1)=J$.

If $d=1$, then all pure projectors coincide and $J=0$, so the upper bound is trivial.  Assume henceforth $d\ge2$.  For the upper bound, choose an orthonormal basis $\{\ket{e_r}\}_{r=1}^{d}$ generically so that none of the rank-one projectors $E_r:=\ketbra{e_r}$ equals any of the finitely many $P_k$.  Set
\begin{equation}
\nu_k
=a\delta_{P_k}+\frac{1-a}{d}\sum_{r=1}^{d}\delta_{E_r}.
\label{eq:keep-reset}
\end{equation}
Its barycenter is $\cD_a(P_k)$.  The output support reveals whether the branch came from the first or second term.  With probability $a$ it reveals $K$, while with probability $1-a$ it is independent of $K$.  Therefore
\begin{equation}
I(X{:}Q)=aI(X{:}K)=aJ,
\end{equation}
which proves \eqref{eq:V-linear-upper}.

\subsection{Continuity}
Finite convexity gives continuity on $(0,1)$.  At zero, \eqref{eq:V-linear-upper} gives
\begin{equation}
0\le V_\xi(a)\le aJ\to0.
\end{equation}
For continuity at one, suppose for contradiction that there are $a_n\uparrow1$ and $\epsilon>0$ with
\begin{equation}
V_\xi(a_n)\le J-\epsilon.
\end{equation}
Choose minimizers $\{\nu_k^{(n)}\}$.  Compactness gives a weakly convergent subsequence for every $k$.  The limiting barycenter is $P_k$, so extremality forces the limit to be $\delta_{P_k}$.  The limiting joint law therefore has mutual information $J$.  Lower semicontinuity of mutual information gives
\begin{equation}
J\le\liminf_n V_\xi(a_n)\le J-\epsilon,
\end{equation}
contradiction.

\subsection{Finite support}
The exact moment representation used in \cref{thm:finite-primal} is proved in Appendix~\ref{app:dual}.  Its atom space is compact and each atom is mapped continuously to a vector consisting of $m$ Hermitian $d\times d$ matrices and one objective value.  This vector lives in a real space of dimension at most $md^2+1$.  Carath\'eodory's theorem therefore represents an optimal moment/objective point using at most $md^2+2$ atoms.

\subsection{Proof of the least-noise and equal-burden results}
\begin{proof}[Proof of \cref{thm:least-noise-step}]
The set $\{a:V_\xi(a)\le\tau\}$ is nonempty because it contains $0$, and it is closed by continuity.  Hence its maximum $a_\xi^\star(\tau)$ exists.  Because $\tau<J=V_\xi(1)$, we have $a_\xi^\star(\tau)<1$.  If $V_\xi(a_\xi^\star)<\tau$, continuity would allow a slightly larger $a$ with $V_\xi(a)\le\tau$, contradicting maximality.  Therefore \eqref{eq:a-star-equality} holds.  Existence of an optimizer for $V$ gives a causal refinement with exactly the target information, and any larger retention has $V>\tau$, so no refinement at that retention can meet the target.
\end{proof}

\begin{proof}[Proof of \cref{cor:equal-burden}]
Set $c=J_0/T$.  Starting from $\xi_0$, recursively choose the largest retention satisfying
\begin{equation}
V_{\xi_{t-1}}(a_t)=J_{t-1}-c
\end{equation}
using \cref{thm:least-noise-step}, and choose a minimizing finite causal refinement.  Then $J_t=J_{t-1}-c$, so induction gives \eqref{eq:equal-J-grid}.  Applying \cref{prop:reverse-burden} at every step yields $c_t=J_{t-1}-J_t=c$.
\end{proof}

\begin{proof}[Proof of \cref{thm:allocation-optimality}]
For every label-independent causal step, \cref{prop:reverse-burden} gives
\begin{equation}
c_t=J_{t-1}-J_t\ge0.
\end{equation}
Telescoping over $t=1,\ldots,T$ and using $J_T=0$ yields
\begin{equation}
\sum_{t=1}^{T}c_t=J_0.
\label{eq:burden-telescope-zero}
\end{equation}
Therefore the largest of the $T$ nonnegative burdens is at least their average, proving \eqref{eq:minimax-burden-lower}.  For any convex $\phi$, Jensen's inequality applied to the vector $(c_1,\ldots,c_T)$ and \eqref{eq:burden-telescope-zero} gives
\begin{equation}
\frac1T\sum_{t=1}^{T}\phi(c_t)
\ge
\phi\!\left(\frac1T\sum_{t=1}^{T}c_t\right)
=\phi\!\left(\frac{J_0}{T}\right),
\end{equation}
which is \eqref{eq:convex-burden-opt}.  The schedule of \cref{cor:equal-burden} has $c_t=J_0/T$ for every $t$, so it attains both lower bounds.  Strict convexity gives equality in Jensen only when all arguments coincide, establishing the final claim.
\end{proof}

\section{The BS lower envelope and binary static saturation}
\label{app:binary-static}
\subsection{Proof of the BS lower bound}
It is enough to prove the bound for a finite trajectory representation, because \cref{thm:V-properties} guarantees a finite-support optimizer for $V$.  Let its distinct output projectors be $Q_1,\ldots,Q_L$, with conditional probabilities $\mu_x(\ell)$ and marginal $\mu(\ell)=\sum_xp_x\mu_x(\ell)$.  Remove zero-marginal atoms, so $\mu(\ell)>0$.  Define diagonal classical states
\begin{equation}
R_x:=\diag(\mu_x(1),\ldots,\mu_x(L)),
\qquad
S:=\diag(\mu(1),\ldots,\mu(L)),
\end{equation}
and the preparation channel
\begin{equation}
\Phi(Z):=\sum_{\ell=1}^{L}\langle\ell|Z|\ell\rangle Q_\ell.
\end{equation}
Then $\Phi(R_x)=\rho_x$ and $\Phi(S)=\bar\rho$.  Put $r_x=S^{-1/2}R_xS^{-1/2}$ and define the unital completely positive map
\begin{equation}
T(Z):=\bar\rho^{-1/2}\Phi(S^{1/2}ZS^{1/2})\bar\rho^{-1/2}.
\end{equation}
For $f(t)=t\log t$ with $f(0)=0$, operator Jensen gives $f(T(r_x))\preceq T(f(r_x))$.  Since $T(r_x)=\bar\rho^{-1/2}\rho_x\bar\rho^{-1/2}$,
\begin{align}
\DBS(\rho_x\|\bar\rho)
&=\Tr\!\left[\bar\rho f(T(r_x))\right]\\
&\le \Tr\!\left[\bar\rho T(f(r_x))\right]
=\Tr\!\left[S f(r_x)\right]\\
&=\DKL(\mu_x\|\mu).
\end{align}
Summing with weights $p_x$ yields
\begin{equation}
B(\cE')\le I(X{:}K').
\end{equation}
Applying this inequality to a finite-support optimizer of $V$ proves \eqref{eq:B-lower-V}.

\subsection{Proof of binary saturation}
Let $\bar\rho=p\rho_0+(1-p)\rho_1$ and define
\begin{equation}
A_x=\bar\rho^{-1/2}\rho_x\bar\rho^{-1/2}.
\end{equation}
Then
\begin{equation}
pA_0+(1-p)A_1=\id,
\end{equation}
so
\begin{equation}
A_1=\frac{\id-pA_0}{1-p}
\end{equation}
and $[A_0,A_1]=0$.  Choose a common orthonormal eigenbasis $\{\ket{u_j}\}_{j=1}^{d}$,
\begin{equation}
A_x=\sum_j a_{xj}\ketbra{u_j}.
\end{equation}
Set
\begin{equation}
w_j=\langle u_j|\bar\rho|u_j\rangle>0,
\qquad
\ket{\phi_j}=\frac{\bar\rho^{1/2}\ket{u_j}}{\sqrt{w_j}},
\end{equation}
and
\begin{equation}
q(j|x)=w_ja_{xj}.
\end{equation}
Normalization follows from
\begin{equation}
\sum_jq(j|x)=\Tr(\bar\rho A_x)=\Tr\rho_x=1.
\end{equation}
Moreover,
\begin{equation}
\sum_jq(j|x)\ketbra{\phi_j}
=\bar\rho^{1/2}A_x\bar\rho^{1/2}
=\rho_x.
\end{equation}
The marginal is
\begin{align}
q(j)
&=p q(j|0)+(1-p)q(j|1)\\
&=w_j\bigl(pa_{0j}+(1-p)a_{1j}\bigr)
=w_j.
\end{align}
Hence
\begin{equation}
I(X{:}J)
=\sum_xp_x\sum_j w_ja_{xj}\log a_{xj}.
\label{eq:binary-I-explicit}
\end{equation}
By the definition \eqref{eq:DBS} and the spectral decomposition of $A_x$,
\begin{equation}
\DBS(\rho_x\|\bar\rho)
=\Tr[\bar\rho A_x\log A_x]
=\sum_j w_j a_{xj}\log a_{xj},
\end{equation}
where $0\log0=0$.  Hence \eqref{eq:binary-I-explicit} is exactly $B(\cE)$, proving achievability.  \Cref{prop:BS-lower} gives the matching lower bound.

\subsection{Equality support of a BS-minimizing finite realization}
\label{app:BS-equality}
The higher-dimensional separation requires an equality statement that is slightly more robust than selecting one common-basis minimizer.  We record exactly the support information needed below.

\begin{lemma}[Equality-support lemma]
\label{lem:BS-equality-support}
Let $\rho\ge0$ and $\sigma>0$ be density operators and let $Q_1,\ldots,Q_L$ be rank-one projectors.  Suppose probability vectors $\mu=(\mu_\ell)$ and $\nu=(\nu_\ell)$ with $\nu_\ell>0$ satisfy
\begin{equation}
\rho=\sum_{\ell=1}^{L}\mu_\ell Q_\ell,
\qquad
\sigma=\sum_{\ell=1}^{L}\nu_\ell Q_\ell.
\label{eq:BS-equality-realization}
\end{equation}
Set
\begin{equation}
A:=\sigma^{-1/2}\rho\sigma^{-1/2},
\qquad
r_\ell:=\frac{\mu_\ell}{\nu_\ell}.
\end{equation}
If
\begin{equation}
\DKL(\mu\|\nu)=\DBS(\rho\|\sigma),
\label{eq:BS-equality-assumption}
\end{equation}
then, for every $\ell$ with $Q_\ell=\ketbra{\phi_\ell}$,
\begin{equation}
A\,\sigma^{-1/2}\ket{\phi_\ell}
=r_\ell\,\sigma^{-1/2}\ket{\phi_\ell}.
\label{eq:BS-equality-eigenvector}
\end{equation}
Thus every atom of a finite BS-minimizing realization is supported, after the canonical $\sigma^{-1/2}$ normalization, inside an eigenspace of the normalized likelihood operator $A$.  Repeated eigenvalues may allow arbitrary mixing within the corresponding eigenspace; no nondegeneracy assumption is made.
\end{lemma}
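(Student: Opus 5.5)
We prove the lemma by turning the equality \eqref{eq:BS-equality-assumption} into equality in an operator Jensen inequality, and then reading off the eigenvector condition from a strict-convexity argument. The starting point is the structure already used for \cref{prop:BS-lower}. Let $L$-dimensional classical states be $R=\diag(\mu)$ and $S=\diag(\nu)>0$, and let $\Phi(Z)=\sum_\ell\langle\ell|Z|\ell\rangle Q_\ell$. Then $\Phi(R)=\rho$ and $\Phi(S)=\sigma$. Define the unital completely positive map
\[
T(Z)=\sigma^{-1/2}\Phi(S^{1/2}ZS^{1/2})\sigma^{-1/2},
\]
so that $T(r)=A$ for $r=\diag(r_\ell)$. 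The map has the explicit Kraus form
\[
T(Z)=\sum_\ell Z_{\ell\ell}\,\ketbra{v_\ell},\qquad \ket{v_\ell}:=\sqrt{\nu_\ell}\,\sigma^{-1/2}\ket{\phi_\ell},
\]
with $\sum_\ell\ketbra{v_\ell}=\id$. On diagonal inputs, $T$ is therefore just the POVM-weighted sum $T(\diag(z))=\sum_\ell z_\ell\,\ketbra{v_\ell}$.

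Next we convert equality of divergences into an operator identity. The BS bound reads
\[
\Tr[\sigma f(A)]\le\Tr[\sigma T(f(r))]=\DKL(\mu\|\nu),\qquad f(t)=t\log t,
\]
and it came from $f(T(r))\preceq T(f(r))$. Equality of traces against the faithful $\sigma>0$ forces the positive operator $T(f(r))-f(A)$ to vanish. This yields the exact identity
\[
f\Bigl(\sum_\ell r_\ell\,\ketbra{v_\ell}\Bigr)=\sum_\ell f(r_\ell)\,\ketbra{v_\ell}.
\]

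The main step is to extract the eigenvector property from this identity. We work with the spectral decomposition $A=\sum_i\alpha_i E_i$ and take a unit vector $\ket{e}$ in the range of $E_i$. Setting $w_\ell=|\langle e|v_\ell\rangle|^2$, we have $\sum_\ell w_\ell=1$ and $\sum_\ell w_\ell r_\ell=\langle e|A|e\rangle=\alpha_i$. Sandwiching the identity with $\ket{e}$ gives
\[
f(\alpha_i)=\sum_\ell w_\ell f(r_\ell).
\]
This is scalar Jensen with equality, and $f$ is strictly convex on $[0,\infty)$. Hence $r_\ell=\alpha_i$ for every $\ell$ with $w_\ell>0$. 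Equivalently, if $r_\ell\neq\alpha_i$ then $\ket{v_\ell}\perp\operatorname{ran}E_i$.

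To finish, fix $\ell$. Then $\ket{v_\ell}$ is orthogonal to every eigenspace whose eigenvalue differs from $r_\ell$. Since the eigenspaces of $A$ span the whole space, $\ket{v_\ell}$ lies in $\operatorname{ran}E$ for the eigenvalue $r_\ell$. If $r_\ell$ is not an eigenvalue, this forces $v_\ell=0$, which is impossible because $\nu_\ell>0$. Dividing by $\sqrt{\nu_\ell}$ gives \eqref{eq:BS-equality-eigenvector}.

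This argument does not depend on whether eigenvalues are repeated, and it needs no separate treatment of atoms with $\mu_\ell=0$. Such atoms have $r_\ell=0$ with $f(0)=0$, and strict convexity still applies at $0$. The only point that needs care is the equality step: passing from equal traces to the operator identity requires both the positivity of $T(f(r))-f(A)$ and the faithfulness of $\sigma$.
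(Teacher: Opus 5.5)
Your proposal is correct and follows essentially the same route as the paper: faithfulness of $\sigma$ turns the divergence equality into equality in the operator Jensen inequality $f(T(r))\preceq T(f(r))$, and strict convexity of $t\log t$ then applies to the probability weights obtained by sandwiching with eigenvectors of $A$. Your explicit form $T(\diag(z))=\sum_\ell z_\ell\ketbra{v_\ell}$ is just a concrete rewriting of the paper's observation that $T(E_\ell)=\nu_\ell\,\sigma^{-1/2}Q_\ell\sigma^{-1/2}$ is a nonzero rank-one operator.
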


\begin{proof}
Let $E_\ell=\ketbra{\ell}$ on $\mathbb C^L$ and set
\begin{equation}
R:=\sum_\ell\mu_\ell E_\ell,
\qquad
S:=\sum_\ell\nu_\ell E_\ell,
\qquad
r:=S^{-1/2}RS^{-1/2}=\sum_\ell r_\ell E_\ell.
\end{equation}
Define the preparation channel
\begin{equation}
\Phi(Z):=\sum_\ell\langle\ell|Z|\ell\rangle Q_\ell
\end{equation}
and the normalized map
\begin{equation}
T(Z):=\sigma^{-1/2}\Phi(S^{1/2}ZS^{1/2})\sigma^{-1/2}.
\end{equation}
Because $\Phi(S)=\sigma$, $T$ is unital and completely positive, and $T(r)=A$.  Let $f(t)=t\log t$ with $f(0)=0$.  The function $f$ is operator convex on $[0,\infty)$, so operator Jensen gives
\begin{equation}
f(A)=f(T(r))\preceq T(f(r)).
\label{eq:BS-Jensen}
\end{equation}
Moreover,
\begin{align}
\Tr[\sigma T(f(r))]
&=\Tr[S f(r)]
=\sum_\ell\nu_\ell f(r_\ell)
=\DKL(\mu\|\nu),\\
\Tr[\sigma f(A)]
&=\DBS(\rho\|\sigma).
\end{align}
Under \eqref{eq:BS-equality-assumption}, the positive operator $T(f(r))-f(A)$ has zero expectation against the faithful state $\sigma$, hence it is zero:
\begin{equation}
T(f(r))=f(A).
\label{eq:BS-Jensen-equality}
\end{equation}

Let $v$ be a unit eigenvector of $A$ with eigenvalue $a$, and define
\begin{equation}
m_\ell:=\langle v|T(E_\ell)|v\rangle\ge0.
\end{equation}
Since $\sum_\ell T(E_\ell)=T(\id)=\id$, the $m_\ell$ form a probability vector.  Using $A=T(r)$ and \eqref{eq:BS-Jensen-equality},
\begin{equation}
a=\sum_\ell m_\ell r_\ell,
\qquad
f(a)=\sum_\ell m_\ell f(r_\ell).
\end{equation}
The scalar function $f(t)=t\log t$ is strictly convex on $[0,\infty)$, so equality in scalar Jensen implies
\begin{equation}
m_\ell>0\quad\Longrightarrow\quad r_\ell=a.
\label{eq:BS-scalar-equality}
\end{equation}
Fix $\ell$.  For every eigenspace of $A$ whose eigenvalue $a$ differs from $r_\ell$, \eqref{eq:BS-scalar-equality} gives $\langle v|T(E_\ell)|v\rangle=0$ for every vector $v$ in that eigenspace.  Positivity of $T(E_\ell)$ then implies that $T(E_\ell)$ annihilates all such eigenspaces.  Hence the range of $T(E_\ell)$ is contained in the eigenspace of $A$ with eigenvalue $r_\ell$.  Finally,
\begin{equation}
T(E_\ell)=\nu_\ell\,\sigma^{-1/2}Q_\ell\sigma^{-1/2}
\end{equation}
is a nonzero rank-one positive operator with range $\linspan\{\sigma^{-1/2}\ket{\phi_\ell}\}$.  Therefore \eqref{eq:BS-equality-eigenvector} follows.
\end{proof}

\section{Proof of the moving-basis criterion}
\label{app:moving-basis}
We first record a simple linear-independence fact.

\begin{lemma}[Projectors from a basis are linearly independent]
\label{lem:projectors-independent}
If $\{\ket{\phi_j}\}_{j=1}^{d}$ is a basis of the Hilbert space, then the rank-one projectors $Q_j=\ketbra{\phi_j}$ are linearly independent as Hermitian operators.
\end{lemma}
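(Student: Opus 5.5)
The plan is to show that a vanishing real linear combination of the projectors must have all coefficients zero; since the $Q_j$ are Hermitian, real independence in the real space of Hermitian operators is what is needed. Suppose real (or even complex) scalars $c_j$ satisfy
\begin{equation*}
\sum_{j=1}^{d} c_j\ketbra{\phi_j}=0 .
\end{equation*}
Because $\{\ket{\phi_j}\}$ is a basis (not necessarily orthonormal), it has a dual basis $\{\ket{\tilde\phi_k}\}$ characterized by $\langle\tilde\phi_k|\phi_j\rangle=\delta_{kj}$. Concretely, if $G$ denotes the matrix whose columns are the $\ket{\phi_j}$, then $G$ is invertible, and the dual vectors are the columns of $(G^{-1})^\dagger$. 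This is the only place where the basis hypothesis enters.

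Next, sandwich the relation between dual vectors: apply $\bra{\tilde\phi_k}$ on the left and $\ket{\tilde\phi_k}$ on the right. This gives
\begin{equation*}
0=\sum_j c_j\,\langle\tilde\phi_k|\phi_j\rangle\langle\phi_j|\tilde\phi_k\rangle=\sum_j c_j\,\delta_{kj}\,\overline{\delta_{kj}}=c_k .
\end{equation*}
Since $k$ was arbitrary, every $c_k$ vanishes. Hence the $Q_j$ are linearly independent, over $\mathbb C$ and therefore also over $\mathbb R$.

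As an alternative, one can phrase this as $\sum_j c_j\ketbra{\phi_j}=G\,\diag(c)\,G^\dagger$. Invertibility of $G$ then forces $\diag(c)=0$. The two arguments are equivalent.

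No step is a real obstacle. The only point needing care is that the $\ket{\phi_j}$ need not be orthogonal; for the representation of \cref{prop:binary-saturation} they are $\bar\rho^{1/2}\ket{u_j}/\sqrt{w_j}$. Simply taking traces against the $Q_k$ themselves therefore does not isolate the coefficients, and this is why I use the dual basis (equivalently, the invertibility of $G$). I will also note that in that application $G=\bar\rho^{1/2}U\diag(w)^{-1/2}$, which is invertible because $\bar\rho$ is faithful. This confirms the parenthetical claim in \cref{thm:moving-basis}.
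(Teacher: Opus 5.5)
Your proposal is correct and follows the paper's proof: both take the dual basis $\langle\widetilde\phi_i|\phi_j\rangle=\delta_{ij}$ and sandwich the vanishing combination to obtain $c_i=0$. Your additional factorization $G\,\diag(c)\,G^\dagger$ and your check that $G$ is invertible in the binary-saturation representation are accurate but not needed for the lemma itself.
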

\begin{proof}
Let $\{\ket{\widetilde\phi_j}\}$ be the dual basis, $\langle\widetilde\phi_i|\phi_j\rangle=\delta_{ij}$.  If $\sum_jc_jQ_j=0$, sandwiching by $\ket{\widetilde\phi_i}$ gives $c_i=0$ for every $i$.
\end{proof}

\begin{proof}[Proof of \cref{thm:moving-basis}]
If a causal refinement sends source branch $j$ into the target basis, its conditional branch probabilities give nonnegative weights $W_{\ell j}$ summing to one and satisfying \eqref{eq:W-closure}.  This proves necessity.

Conversely, suppose \eqref{eq:W-closure} holds.  For each known source branch $j$, the right-hand side is a pure-state ensemble decomposition of $\Phi(P_j)$.  By HJW, an environment measurement on a Stinespring dilation can realize this decomposition; because $j$ is in the past trajectory record, the measurement may depend on $j$ without accessing $X$.  Thus $W$ is a causal refinement.

Let $q_x$ be the source coefficient vector in the source common basis.  The channel image has decomposition
\begin{equation}
\Phi(\rho_x)=\sum_\ell (Wq_x)_\ell Q_\ell.
\end{equation}
The target BS representation is another decomposition of the same state using the $d$ target projectors.  By \cref{lem:projectors-independent}, these coefficients are unique.  Hence the target coefficient vector is exactly $Wq_x$ for both labels.  Therefore the entire target BS representation, not only each branch barycenter, is propagated by the same label-independent kernel.
\end{proof}

\section{Binary qubits: exact closure of the BS-optimal trajectory}
\label{app:qubit}
We prove \cref{thm:qubit-exact}.  For two distinct qubit density operators with prior $p\in(0,1)$, the average state is faithful: a zero eigenvector of the average would have zero expectation under both positive states, forcing both states to be supported on the same one-dimensional subspace and hence to coincide.  Thus \cref{prop:binary-saturation} applies also at $\lambda=1$; for $\lambda<1$ faithfulness follows directly from depolarization.  Identify a qubit state with its Bloch vector $r\in\mathbb R^3$, $\|r\|\le1$.  Two distinct states determine a unique affine line.  Write it as
\begin{equation}
L=\{c+sv:s\in\mathbb R\},
\end{equation}
where $\|v\|=1$ and $c\perp v$.  Depolarization scales every Bloch vector by $\lambda$, so the line at cumulative retention $\lambda$ is
\begin{equation}
L_\lambda=\{\lambda c+sv:s\in\mathbb R\}.
\end{equation}
Its intersections with the unit sphere are the two points in \eqref{eq:chord-endpoints}.

For every $\lambda>0$, the two noisy density operators remain distinct.  Any two-state common-basis representation achieving \cref{prop:binary-saturation} has a line segment containing both states; hence its two pure states lie on their unique affine line $L_\lambda$.  The only pure states on that line are the sphere intersections $n_\pm(\lambda)$, so $P_\pm(\lambda)$ are the chord-endpoint BS-optimal projectors.  At $\lambda=0$ the two density operators both equal $\id/2$ and the optimum is nonunique; the inherited endpoints $P_\pm(0)$ nevertheless give a valid BS-optimal representation with equal weights and zero mutual information.  We select this representation when the target level is zero.

Now let $a=\lambda'/\lambda$.  Applying the one-step depolarizing map to the source endpoints gives
\begin{equation}
a n_\pm(\lambda)
=\lambda'c\pm a h_\lambda v.
\end{equation}
The target chord endpoints are
\begin{equation}
n_\pm(\lambda')=\lambda'c\pm h_{\lambda'}v.
\end{equation}
Define
\begin{equation}
\eta=\frac{a h_\lambda}{h_{\lambda'}}
=\frac{\lambda'}{\lambda}\frac{h_\lambda}{h_{\lambda'}}.
\end{equation}
Then
\begin{equation}
a n_+(\lambda)
=\frac{1+\eta}{2}n_+(\lambda')
+\frac{1-\eta}{2}n_-(\lambda'),
\end{equation}
and the analogous equation with $+$ and $-$ interchanged holds for the other endpoint.  It remains to verify $0\le\eta\le1$.  Nonnegativity is immediate.  Squaring the upper bound gives
\begin{align}
\lambda'^2\bigl(1-\lambda^2\|c\|^2\bigr)
&\le\lambda^2\bigl(1-\lambda'^2\|c\|^2\bigr),
\end{align}
which reduces to $\lambda'^2\le\lambda^2$.  Thus \eqref{eq:qubit-W} is a valid stochastic matrix and \cref{thm:moving-basis} proves causal BS closure.

Finally, the resulting output representation is BS-optimal, so its information is $B(\cE_{\lambda'})$ by \cref{prop:binary-saturation}.  The BS lower bound \eqref{eq:B-lower-V} then shows that this feasible value is the minimum $V$, proving \eqref{eq:V-B-qubit}.

\section{Higher-dimensional obstruction}
\label{app:highd}
We prove \cref{thm:highd-nogo}.  Let
\begin{equation}
S=\linspan\{\psi_0,\psi_1\},
\qquad \dim S=2,
\end{equation}
and let $S^\perp$ be its orthogonal complement.  At an interior retention $\lambda$,
\begin{equation}
\rho_x^\lambda
=\lambda\ketbra{\psi_x}+(1-\lambda)\frac{\id}{d}.
\end{equation}
Both states, and therefore their average $\bar\rho^\lambda$, are block diagonal with respect to $S\oplus S^\perp$.  On $S^\perp$,
\begin{equation}
\rho_0^\lambda
=\rho_1^\lambda
=\bar\rho^\lambda
=\frac{1-\lambda}{d}\id_{S^\perp}.
\end{equation}
Hence the normalized likelihood operator
\begin{equation}
A_0^\lambda
=(\bar\rho^\lambda)^{-1/2}
\rho_0^\lambda
(\bar\rho^\lambda)^{-1/2}
\end{equation}
is exactly the identity on $S^\perp$.

On $S$, the eigenvalue $1$ is absent.  Indeed,
\begin{equation}
A_0^\lambda v=v
\end{equation}
for a nonzero $v\in S$ would imply
\begin{equation}
(\rho_0^\lambda-\bar\rho^\lambda)(\bar\rho^\lambda)^{-1/2}v=0,
\end{equation}
which is equivalent, up to the nonzero factor $\lambda(1-p)$, to
\begin{equation}
(\ketbra{\psi_0}-\ketbra{\psi_1})w=0
\end{equation}
for a nonzero $w\in S$.  For distinct nonorthogonal pure states, the difference of these rank-one projectors has eigenvalues
\begin{equation}
\pm\sqrt{1-|\langle\psi_0|\psi_1\rangle|^2}
\end{equation}
on $S$, so it is invertible there.

The two eigenvalues of $A_0^\lambda|_S$ are also distinct.  Otherwise $A_0^\lambda|_S=c\id_S$ for some $c$.  Since
\begin{equation}
pA_0^\lambda+(1-p)A_1^\lambda=\id,
\end{equation}
$A_1^\lambda|_S$ would also be scalar.  Consequently
\begin{equation}
\rho_x^\lambda|_S
=(\bar\rho^\lambda)^{1/2}A_x^\lambda(\bar\rho^\lambda)^{1/2}|_S
=c_x\bar\rho^\lambda|_S,
\end{equation}
so the two states would commute on $S$; on $S^\perp$ both are scalar multiples of the identity, hence they would commute globally.  This contradicts
\begin{equation}
[\rho_0^\lambda,\rho_1^\lambda]
=\lambda^2[\ketbra{\psi_0},\ketbra{\psi_1}]\neq0.
\label{eq:highd-noncomm-lambda}
\end{equation}
Thus the spectral decomposition of $A_0^\lambda$ consists of two one-dimensional eigenspaces contained in $S$ and the eigenvalue-$1$ eigenspace $S^\perp$.

We next translate this spectral structure into a statement about \emph{every} finite BS-optimal trajectory, not only the canonical common-basis construction.  Let $\xi_\lambda$ be any finite trajectory representation satisfying
\begin{equation}
I(X{:}K)=B(\cE_\lambda).
\end{equation}
Write $\mu_x(k)=P(K=k|X=x)$ and $\mu(k)=P(K=k)$, discarding any zero-marginal atoms.  By \cref{prop:BS-lower},
\begin{equation}
\DKL(\mu_x\|\mu)
\ge
\DBS(\rho_x^\lambda\|\bar\rho^\lambda)
\end{equation}
for both $x=0,1$.  Since the positive-prior weighted sum of these two inequalities is an equality, each pairwise inequality is itself an equality.  Applying \cref{lem:BS-equality-support} to the pair $(\rho_0^\lambda,\bar\rho^\lambda)$ shows that every source atom lies either in $S$ or in $S^\perp$: the normalization by $(\bar\rho^\lambda)^{-1/2}$ preserves the decomposition $S\oplus S^\perp$.  Moreover, the atoms in $S$ can lie only on the two pure directions induced by the two one-dimensional eigenspaces of $A_0^\lambda|_S$.

Because
\begin{equation}
\bar\rho^\lambda|_{S^\perp}
=\frac{1-\lambda}{d}\id_{S^\perp}>0
\end{equation}
and $d\ge3$, at least one source atom has positive marginal probability and is supported in $S^\perp$.  Fix such an atom
\begin{equation}
P_u=\ketbra{u},
\qquad u\in S^\perp.
\end{equation}
Let $a=\lambda'/\lambda\in(0,1)$.  Its required one-step barycenter is
\begin{equation}
Y=\cD_a(P_u)
=aP_u+(1-a)\frac{\id}{d},
\end{equation}
whose restriction to $S$ is
\begin{equation}
Y|_S=\frac{1-a}{d}\id_S.
\label{eq:Y-S-block}
\end{equation}

Suppose, for contradiction, that a causal refinement of $\xi_\lambda$ attains the BS value at the target level.  By \cref{thm:V-properties}, an optimizer for $V_{\xi_\lambda}(a)$ exists and may be chosen with finite support.  If duplicate classical output labels carry the same pure projector, coarse-grain them to that projector; this preserves all branch barycenters and cannot increase mutual information.  The BS lower bound then forces the coarse-grained representation to remain optimal.  We may therefore work with a finite pure-state target trajectory $L$ satisfying
\begin{equation}
I(X{:}L)=B(\cE_{\lambda'}).
\end{equation}
As above, both pairwise BS inequalities must be equalities.  Applying \cref{lem:BS-equality-support} at $\lambda'$ shows that every target atom lies either in $S$ or in $S^\perp$, and that there are only two possible target pure-state directions in $S$; denote their projectors by $Q_1,Q_2$.

Condition now on the fixed source branch $P_u$.  Target atoms in $S^\perp$ contribute nothing to the $S$ block of its barycenter.  Therefore \eqref{eq:Y-S-block} forces
\begin{equation}
w_1Q_1+w_2Q_2
=\frac{1-a}{d}\id_S
\label{eq:highd-two-projectors}
\end{equation}
for some branch probabilities $w_1,w_2\ge0$.  Taking traces gives
\begin{equation}
w_1+w_2=\frac{2(1-a)}{d}>0.
\end{equation}
After normalization, \eqref{eq:highd-two-projectors} is a convex combination of two pure qubit states equal to $\id_S/2$.  The Bloch vector of $\id_S/2$ is zero, so two unit Bloch vectors can average to zero only with equal weights and opposite directions.  Hence
\begin{equation}
Q_1Q_2=0,
\qquad
w_1=w_2=\frac{1-a}{d}.
\end{equation}
Thus $Q_1,Q_2$ form an orthogonal basis of $S$.

But every target atom in $S$ is one of $Q_1,Q_2$, while every remaining target atom lies in $S^\perp$.  Consequently both $\rho_0^{\lambda'}|_S$ and $\rho_1^{\lambda'}|_S$ are diagonal in the same orthogonal basis $\{Q_1,Q_2\}$, and on $S^\perp$ both states are proportional to the identity.  They therefore commute globally.  This is impossible because
\begin{equation}
[\rho_0^{\lambda'},\rho_1^{\lambda'}]
=\lambda'^2[\ketbra{\psi_0},\ketbra{\psi_1}]\neq0.
\end{equation}
We have shown that no BS-optimal source trajectory can be causally refined into a BS-optimal target trajectory.

Finally, \cref{thm:V-properties} gives attainment of $V_{\xi_\lambda}(a)$, while \cref{prop:BS-lower} gives
\begin{equation}
V_{\xi_\lambda}(a)\ge B(\cE_{\lambda'}).
\end{equation}
If equality held, the attained optimizer would be precisely the BS-optimal target trajectory ruled out above.  Therefore the inequality is strict, proving \eqref{eq:highd-strict}.

\section{Finite moment formulation and operator dual}
\label{app:dual}
We prove \cref{thm:finite-primal} and the following dual characterization.  For Hermitian $Y_k$ and scalar $\beta$, define
\begin{equation}
 \beta+\lamax\!\left(\sum_k\alpha_kY_k\right)\le g(\alpha)\qquad\forall\alpha\in\Delta_m.
\label{eq:dual-constraint}
\end{equation}
\begin{theorem}[No-gap operator dual]
\label{thm:dual}
\begin{equation}
 V_\xi(a)=\sup_{\beta,\{Y_k\}}\left\{\beta+\sum_k\Tr[Y_kM_k(a)]:\ \eqref{eq:dual-constraint}\ \text{holds}\right\}.
\label{eq:dual}
\end{equation}
\end{theorem}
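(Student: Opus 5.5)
The plan is to first rewrite $V_\xi(a)$ as a linear program over a single probability measure on a compact atom space; the dual then follows from conic LP duality. By Bayes, a refinement $\{\nu_k\}$ induces a joint law of $(K,Q)$. Disintegrating with respect to $Q$ gives a marginal $\omega$ on $\mathsf P_d$ and a posterior $\alpha(Q)\in\Delta_m$ over the branch $k$. Then $P(X=x|Q)=\sum_k\alpha_k(Q)\eta^k_x$, so
\[
I(X{:}Q)=\int g(\alpha(Q))\,\omega(\dd Q).
\]
The barycenter constraints become $\int\alpha_k(Q)\,Q\,\omega(\dd Q)=M_k(a)$. Now lift to a probability measure $\tau$ on the compact set $\mathcal Z=\Delta_m\times\mathsf P_d$, with
\[
\text{objective }\int g(\alpha)\,\tau(\dd\alpha,\dd Q)\quad\text{and constraints}\quad \int\alpha_kQ\,\tau=M_k(a).
\]
Every feasible $\tau$ produces a refinement by setting $\nu_k(A)=\pi_k^{-1}\int \alpha_k\mathbf 1_A(Q)\,\tau$; the constraint fixes the mass $\Tr M_k=\pi_k$. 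Its information is at most the lifted objective, because merging atoms with the same $Q$ only lowers $I$ by convexity of $g$. So the lifted problem has the same value as $V_\xi(a)$. This also settles \cref{thm:finite-primal}: Carath\'eodory applied to the moment map $(\alpha,Q)\mapsto(\alpha_kQ)_k\oplus g(\alpha)$ gives an optimum with at most $md^2+2$ atoms.

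Weak duality is then routine. Take $\beta,\{Y_k\}$ satisfying \eqref{eq:dual-constraint}. For each atom, $\Tr[(\sum_k\alpha_kY_k)Q]\le\lamax(\sum_k\alpha_kY_k)\le g(\alpha)-\beta$. Integrate this against a feasible $\tau$ and use $\int\tau=1$. This gives
\[
\beta+\sum_k\Tr[Y_kM_k(a)]\le\int g\,\dd\tau,
\]
so every dual value is at most $V_\xi(a)$. Note that the dual constraint is exactly pointwise nonnegativity of $g(\alpha)-\beta-\Tr[\sum_k\alpha_kY_kQ]$ on $\mathcal Z$, after maximizing over pure $Q$.

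For the no-gap statement, I would use the standard moment-cone argument. Let
\[
\mathcal C=\conv\{(1,(\alpha_kQ)_k,g(\alpha)):(\alpha,Q)\in\mathcal Z\}+\{0\}\times\{0\}\times\mathbb R_+ .
\]
The function $g$ is continuous and bounded on $\Delta_m$, since $p_x>0$, and $\mathcal Z$ is compact. Hence the convex hull of the compact image is compact and $\mathcal C$ is closed in a finite-dimensional space. The value $V_\xi(a)$ is the least $v$ with $(1,M(a),v)\in\mathcal C$. Fix $v<V_\xi(a)$. The point $(1,M(a),v)$ lies outside the closed convex set $\mathcal C$, so strict separation gives a hyperplane with coefficients $(\beta',-Y,s)$. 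Because $\mathcal C$ is unbounded upward in the last coordinate, the recession direction forces $s\ge0$.

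The main obstacle is excluding a vertical separator, $s=0$. In a finite-dimensional setting I would handle it by a Slater-type interiority argument. The key point is that $\mathcal C$ contains, above each constraint value, all larger objective values. A vertical separator would therefore separate $(1,M(a))$ from the projection of $\mathcal C$. But $(1,M(a))$ lies in that projection, because the feasible set is nonempty (keep-or-reset). So a vertical hyperplane cannot strictly separate, which forces $s>0$. After normalizing $s=1$, the separation inequality on the generators is exactly \eqref{eq:dual-constraint} with $\beta$ absorbing the constant, and it gives a dual value above $v$. Letting $v\uparrow V_\xi(a)$ proves \eqref{eq:dual}.

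If the strict separation is delicate, for example when $(1,M(a))$ sits on the boundary of the projection, I would fall back on Sion's minimax theorem. Apply it to the Lagrangian $\int(g-\Tr[\sum_k\alpha_kY_kQ])\,\dd\tau+\sum_k\Tr[Y_kM_k]$ over compact $\tau$ and $\{Y_k\}$ ranging in growing balls. Then pass to the limit using compactness and the attainment from \cref{thm:V-properties}.
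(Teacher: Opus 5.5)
Your proposal is correct and follows essentially the paper's route: you lift to the same moment problem on $\mathsf P_d\times\Delta_m$, obtain weak duality pointwise via $\lamax$, and prove no gap by strictly separating $(M(a),v)$, $v<V_\xi(a)$, from the convex hull of the image of the compact atom space in finite dimension. The only difference is how you force $s>0$. You use the upward recession direction together with feasibility of $M(a)$, whereas the paper uses the attained point $(M(a),V_\xi(a))$ in the hull to get $s\varepsilon>0$ directly; your argument needs no interiority, so the Slater/Sion fallback is unnecessary.
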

Let $\pi_k=P(K=k)$ and define the posterior vectors
\begin{equation}
\eta^k=(P(X=x|K=k))_x.
\end{equation}
Consider any finite output record $L$.  Write
\begin{equation}
\tau_\ell=P(L=\ell),
\qquad
\alpha_{\ell k}=P(K=k|L=\ell),
\end{equation}
and let $Q_\ell$ be the pure state associated with output record $\ell$.  The Markov relation $X\to K\to L$ gives
\begin{equation}
P(X=\cdot|L=\ell)=\sum_k\alpha_{\ell k}\eta^k.
\end{equation}
Hence
\begin{equation}
I(X{:}L)=\sum_\ell\tau_\ell g(\alpha_\ell),
\end{equation}
with $g$ as in \eqref{eq:g-alpha}.

The joint barycenter for source branch $k$ is
\begin{align}
\sum_\ell\tau_\ell\alpha_{\ell k}Q_\ell
&=P(K=k)\sum_\ell P(L=\ell|K=k)Q_\ell\\
&=\pi_k\cD_a(P_k),
\end{align}
which proves feasibility of \eqref{eq:finite-primal-constraints}.

Conversely, suppose finite atoms satisfy \eqref{eq:finite-primal-constraints}.  Taking traces yields
\begin{equation}
\sum_\ell\tau_\ell\alpha_{\ell k}=\pi_k.
\end{equation}
Summing over $k$ and using $\sum_k\alpha_{\ell k}=1$ gives $\sum_\ell\tau_\ell=1$, so probability normalization is already implicit in the matrix constraints.  For $\pi_k>0$, define
\begin{equation}
W(\ell|k)=\frac{\tau_\ell\alpha_{\ell k}}{\pi_k}.
\end{equation}
Then $W(\cdot|k)$ is stochastic and
\begin{equation}
\sum_\ell W(\ell|k)Q_\ell=\cD_a(P_k).
\end{equation}
Thus $W$ is a causal refinement and Bayes' rule recovers the stated $\alpha_{\ell k}$.  If several indices $\ell$ carry the same projector $Q$, merge them into one atom with total mass $\tau=\sum_\ell\tau_\ell$ and posterior $\alpha=(\sum_\ell\tau_\ell\alpha_\ell)/\tau$.  All matrix moments are unchanged, while convexity of $g$ gives $\tau g(\alpha)\le\sum_\ell\tau_\ell g(\alpha_\ell)$.  Hence redundant classical labels cannot lower the optimum, and the finite program is exactly equivalent to the projector-valued record in \cref{def:V}.  This proves the equivalence.

For an arbitrary feasible family of probability measures $\{\nu_k\}$, let
\begin{equation}
\tau(\dd Q):=\sum_k\pi_k\nu_k(\dd Q).
\end{equation}
Because $\pi_k\nu_k\ll\tau$, the Radon--Nikodym derivatives
\begin{equation}
\alpha_k(Q):=\frac{\dd(\pi_k\nu_k)}{\dd\tau}(Q)
\end{equation}
exist and satisfy $\alpha(Q)\in\Delta_m$ for $\tau$-almost every $Q$.  The Markov relation gives
\begin{equation}
P(X=\cdot|Q)=\sum_k\alpha_k(Q)\eta^k,
\end{equation}
so $I(X{:}Q)=\int g(\alpha(Q))\,\tau(\dd Q)$.  Pushing $\tau$ forward by $Q\mapsto(Q,\alpha(Q))$ therefore converts the original refinement, with the same objective and moments, into a probability measure on the compact atom space
\begin{equation}
\mathcal Z=\mathsf P_d\times\Delta_m.
\end{equation}
Conversely, let $\widetilde\tau$ be any probability measure on $\mathcal Z$ satisfying the moment constraints.  For each Borel set $A\subseteq\mathsf P_d$, define branch measures by
\begin{equation}
\pi_k\nu_k(A):=\int_{A\times\Delta_m}\alpha_k\,\widetilde\tau(\dd Q,\dd\alpha).
\end{equation}
The trace constraints make each $\nu_k$ a probability measure with the required barycenter.  Conditional on $Q$, the actual posterior is the conditional mean $\bar\alpha(Q)=\mathbb E_{\widetilde\tau}[\alpha|Q]$, so convexity of $g$ gives
\begin{equation}
I(X{:}Q)=\int g(\bar\alpha(Q))\,\widetilde\tau_Q(\dd Q)
\le\int g(\alpha)\,\widetilde\tau(\dd Q,\dd\alpha).
\end{equation}
Thus the lifted program cannot achieve a value below the original problem after projection, while every original refinement embeds in the lifted program with equal value.  The two optima are therefore identical.

Define the continuous moment map
\begin{equation}
F(z)=(\alpha_1Q,\ldots,\alpha_mQ)
\end{equation}
and scalar cost $g(\alpha)$.  The primal is a linear optimization over probability measures $\tau$ on $\mathcal Z$:
\begin{equation}
\min_\tau\int g(\alpha)\,\tau(\dd z)
\quad\text{s.t.}\quad
\int\alpha_kQ\,\tau(\dd z)=M_k(a).
\end{equation}
Compactness and continuity imply existence.  Applying Carath\'eodory to the vector $(F(z),g(\alpha))$ gives the support bound $md^2+2$ used in the main text.

For the dual, introduce Hermitian multipliers $Y_k$ and a multiplier $\beta$ for probability normalization.  The Lagrangian is
\begin{align}
\mathcal L(\tau,\beta,Y)
&=\beta+\sum_k\Tr[Y_kM_k(a)]\\
&\quad+\int\left[
 g(\alpha)-\beta-\sum_k\alpha_k\Tr(Y_kQ)
\right]\tau(\dd z).
\end{align}
The infimum over nonnegative measures is finite exactly when
\begin{equation}
\beta+\sum_k\alpha_k\Tr(Y_kQ)\le g(\alpha)
\end{equation}
for every pure $Q$ and every $\alpha\in\Delta_m$.  Maximizing the left-hand side over pure $Q$ gives
\begin{equation}
\sup_{Q\ \mathrm{pure}}\Tr\!\left[Q\sum_k\alpha_kY_k\right]
=\lamax\!\left(\sum_k\alpha_kY_k\right),
\end{equation}
which yields \eqref{eq:dual-constraint} and weak duality.

To show no gap, consider the compact convex set
\begin{equation}
\mathcal C
=\conv\{(F(z),g(\alpha)):z\in\mathcal Z\}
\end{equation}
in the finite-dimensional real vector space of $m$ Hermitian matrices plus one scalar.  Compactness follows because the generating set is compact and convex hulls of compact sets are compact in finite dimension.  Since the primal optimum is attained, $(M_1(a),\ldots,M_m(a),V_\xi(a))\in\mathcal C$, whereas for every $\varepsilon>0$ the point
$(M_1(a),\ldots,M_m(a),V_\xi(a)-\varepsilon)$ lies outside $\mathcal C$.  Strong separation therefore gives Hermitian coefficients $H_k$, a scalar $s$, and $\gamma$ such that
\begin{equation}
 \sum_k\Tr[H_kM_k(a)]+s(V_\xi(a)-\varepsilon)<\gamma
 \le \sum_k\alpha_k\Tr(H_kQ)+s\,g(\alpha)
 \quad\forall(Q,\alpha)\in\mathcal Z.
\end{equation}
Applying the right inequality to any feasible point of $\mathcal C$ with moments $M_k(a)$ and cost $V_\xi(a)$ shows $s\varepsilon>0$, hence $s>0$.  Divide by $s$, set $Y_k=-H_k/s$ and $\beta=\gamma/s$, and obtain
\begin{equation}
 \beta+\sum_k\alpha_k\Tr(Y_kQ)\le g(\alpha)
 \quad\forall(Q,\alpha)\in\mathcal Z,
\end{equation}
so $(\beta,\{Y_k\})$ is dual feasible.  The strict separation inequality also gives
\begin{equation}
 \beta+\sum_k\Tr[Y_kM_k(a)]>V_\xi(a)-\varepsilon.
\end{equation}
Thus the dual supremum is at least $V_\xi(a)-\varepsilon$ for every $\varepsilon>0$.  Letting $\varepsilon\downarrow0$ and combining with weak duality proves the no-gap identity \eqref{eq:dual}.  The theorem asserts equality of primal value and dual supremum; dual attainment is not needed.

\section{Numerical-validation details}
\label{app:numerics}
The numerical validation uses natural logarithms throughout.  The restricted-primal solver is CLARABEL with absolute, relative, and feasibility tolerances $2\times10^{-9}$ and a maximum of 2000 iterations.  In the one-step validation suite, 55 instances returned \texttt{optimal} and two returned \texttt{optimal\_inaccurate}; the maximum barycenter residual was $1.575\times10^{-7}$.  The BS-representation reconstruction residual was at most $7.468\times10^{-16}$.  A fixed Haar seed 20260813 generated the nested qutrit libraries.  Reverse-learning experiments use fixed trajectories and ten fixed seeds; model selection is based on validation cross entropy rather than on the static-versus-causal effect size.

\begin{table*}[t]
\centering
\caption{Complete predeclared qutrit grid.  $\widehat V_0$ and $\widehat V_{256}$ are restricted-primal upper approximations using the feasibility library alone and the same library augmented with 256 fixed Haar states.  The final two columns are the minimum and maximum source-atom residuals for direct convex-hull closure into the target BS basis.  Strict $V>B$ follows analytically from \cref{thm:highd-nogo}; $\widehat V_{256}-B$ is not used as a proof of the gap.}
\label{tab:qutrit-grid}
\scriptsize
\begin{tabular}{ccccrrrrr}
\toprule
$\theta$ & $\lambda$ & $r$ & $B$ & $\widehat V_0$ & $\widehat V_{256}$ & $\widehat V_{256}-B$ & closure min & closure max\\
\midrule
$\pi/3$ & 0.70 & 0.4 & 0.040778 & 0.041315 & 0.041315 & 0.000538 & 0.005757 & 0.051810\\
$\pi/3$ & 0.70 & 0.7 & 0.125276 & 0.127118 & 0.126784 & 0.001508 & 0.004572 & 0.041152\\
$\pi/3$ & 0.85 & 0.4 & 0.059898 & 0.060417 & 0.060417 & 0.000519 & 0.003219 & 0.061163\\
$\pi/3$ & 0.85 & 0.7 & 0.188172 & 0.189895 & 0.189590 & 0.001418 & 0.002511 & 0.047708\\
$\pi/3$ & 0.95 & 0.4 & 0.074791 & 0.075041 & 0.075041 & 0.000250 & 0.001137 & 0.067101\\
$\pi/3$ & 0.95 & 0.7 & 0.239562 & 0.240397 & 0.240253 & 0.000691 & 0.000877 & 0.051739\\
$\pi/4$ & 0.70 & 0.4 & 0.028013 & 0.028693 & 0.028685 & 0.000672 & 0.008096 & 0.072865\\
$\pi/4$ & 0.70 & 0.7 & 0.090956 & 0.093009 & 0.092822 & 0.001866 & 0.006377 & 0.057393\\
$\pi/4$ & 0.85 & 0.4 & 0.041689 & 0.042338 & 0.042314 & 0.000625 & 0.004517 & 0.085831\\
$\pi/4$ & 0.85 & 0.7 & 0.141985 & 0.144034 & 0.143885 & 0.001900 & 0.003485 & 0.066224\\
$\pi/4$ & 0.95 & 0.4 & 0.052573 & 0.052884 & 0.052871 & 0.000299 & 0.001594 & 0.094018\\
$\pi/4$ & 0.95 & 0.7 & 0.186348 & 0.187398 & 0.187332 & 0.000984 & 0.001213 & 0.071591\\
\bottomrule
\end{tabular}
\end{table*}

\paragraph{High-dimensional causal-correction bridge.}
For the same 12 predeclared qutrit settings, we additionally constructed explicit source-blind one-step refinements using nested fixed Haar libraries with $0,64,128,256,512$, and, when the solver succeeded, $1024$ random atoms.  The primary setting $(\pi/4,0.85,0.7)$ decreased from $\widehat V_0=0.144034$ to $\widehat V_{512}=0.143885$, with the final $256\to512$ change only $3.06\times10^{-6}$ nats compared with the $1.90\times10^{-3}$ feasible causal correction.  The $1024$-atom primary solve failed under CLARABEL and is not used.  We do not report a rigorous numerical dual certificate; all finite-library values are therefore treated strictly as feasible upper approximations, while analytic strictness follows from \cref{thm:highd-nogo}.

For each completed refinement, the optimized column-stochastic transition kernel was revalidated branch by branch.  In the primary setting the maximum barycenter residual is $4.07\times10^{-9}$, while the exact joint law built from the feasible transition gives
\[
 I(X{:}K_{\rm src})-I(X{:}K_{\rm tgt})
 =I(X{:}K_{\rm src}\mid K_{\rm tgt})
 =\mathrm{CE}_{\rm noX}^\star-\mathrm{CE}_{\rm withX}^\star
 =0.223874,
\]
up to machine precision.  By contrast, independently substituting the static BS optimum would predict $0.225775$.  Thus the finite causal construction operationalizes the high-dimensional obstruction without treating its finite-library gap as an exact numerical value of $V-B$.

\paragraph{High-dimensional multi-step scheduling.}
For the primary noncommuting qutrit family we initialize at cumulative retention $\lambda_0=0.85$ and construct a four-step path to the fully mixed endpoint.  The density-static schedule selects $\lambda_t$ by linearly spacing the BS values $B(\cE_t)$, while still propagating every step from the actual finite trajectory produced at the preceding level.  The causal schedule instead searches the current finite-library refinement problem at each step for a retention whose realized information follows $J_t=(1-t/4)J_0$.  We repeat the full recursive construction independently with fixed nested Haar prefixes of size $128$, $256$, and $512$; no retention path is inherited from a smaller library.  Every selected transition is revalidated branch by branch, with maximum barycenter residual $1.37\times10^{-8}$ over the independently optimized causal paths.  The resulting static-versus-causal summary is reported in \cref{tab:qutrit-multistep}.  Although the finite-library retention coordinates adapt as the library grows, all three causal trajectories realize nearly constant information decrements and substantially reduce the worst step relative to density-static equalization.

\paragraph{High-dimensional reverse learning.}
We persist the complete $N=256$ and $N=512$ static and causal qutrit trajectories, including their state-resolved supports and transition kernels, and validate the Markov propagation, density reconstruction, branchwise barycenters, and exact Bayes reverse identity before learning.  The trajectories, schedules, Haar seed, and exact burdens are then frozen.  Predictor capacity, learning rate, sample budget, and the smoothing strength of the empirical estimator are selected by validation weighted cross entropy only; all ten fixed seeds are retained.  The final neural model uses timestep-specific categorical reverse heads, which remove interference between the different finite state alphabets at successive steps.

The final predictor reaches burden MAE $9.72\times10^{-4}$ nats at $N=256$ and $1.36\times10^{-3}$ nats at $N=512$, with mean excess cross entropy above the Bayes references $1.08\times10^{-3}$ and $1.75\times10^{-3}$ nats, respectively.  The learned burden therefore closely tracks the realized information geometry: CV reductions are $95.19\%$ and $92.02\%$, and worst-step reductions are $25.37\%$ and $16.29\%$.  Both comparisons have the same direction in all ten seeds.  The model comparison in \cref{tab:qutrit-learning-models} further shows that a learned categorical table and a validation-selected empirical conditional-frequency estimator reproduce the same schedule ordering.

\begin{table*}[t]
\centering
\caption{Qutrit reverse-learning model comparison on the frozen $N=256/512$ trajectories.  MAE is between learned and exact stepwise burdens; excess CE is the mean held-out cross entropy above the Bayes reference.  CV and worst reductions compare the causal schedule against density-static BS equalization.  All entries retain the same direction in $10/10$ seeds.}
\label{tab:qutrit-learning-models}
\scriptsize
\begin{tabular}{llrrrr}
\toprule
Model & $N$ & burden MAE & excess CE & learned CV red. & learned worst red. \\
\midrule
Empirical frequency & 256 & 0.001279 & 0.000898 & 93.87\% & 25.10\% \\
Empirical frequency & 512 & 0.001860 & 0.001714 & 90.08\% & 15.56\% \\
Improved MLP & 256 & \textbf{0.000972} & 0.001082 & \textbf{95.19\%} & \textbf{25.37\%} \\
Improved MLP & 512 & \textbf{0.001362} & 0.001754 & \textbf{92.02\%} & \textbf{16.29\%} \\
Learned categorical table & 256 & 0.001185 & 0.006808 & 94.35\% & 25.04\% \\
Learned categorical table & 512 & 0.001702 & 0.008406 & 90.58\% & 15.86\% \\
\bottomrule
\end{tabular}
\end{table*}

\paragraph{End-to-end rollout and propagation diagnostics.}
We evaluate fully source-blind reverse rollouts on the frozen $N=256/512$ trajectories using the same $4096$-trajectory budget, architecture, optimizer, and ten paired seeds. All trajectory and reverse-kernel checks pass; the largest information-identity error is $6.8\times10^{-16}$. Exact probability-vector propagation removes rollout sampling noise.

The diagnostics isolate two finite-model effects beyond the exact burden $c_t$. At $N=256$, channel-constrained scheduling has larger weighted TV/KL mismatch at every step and ends at trace distance $0.00921$ versus $0.00753$; its contraction advantage does not offset the local error. At $N=512$, local mismatch is again larger, but the endpoint improves to $0.00804$ versus $0.01253$. Single-step hybrids identify stronger contraction as the mechanism: the isolated $t=4$ perturbation produces endpoint error $0.00379$ versus $0.00691$ (paired difference $-0.00312$, 95\% CI $[-0.00591,-0.00076]$), with TV amplification $0.0517$ versus $0.1133$; $t=2$ shows the same advantage. The exact burden, finite-model mismatch, and trajectory-dependent propagation are therefore distinct components of learned rollout behavior.

\begin{table}[t]
\centering
\caption{Source-blind qutrit rollout diagnostics under exact probability propagation. Endpoint trace distance uses all four learned reverse steps; isolated $t=4$ uses the learned $t=4$ kernel and exact kernels elsewhere.}
\label{tab:qutrit-rollout-diagnostic}
\scriptsize
\begin{tabular}{crrrr}
\toprule
$N$ & endpoint CC & endpoint static & isolated $t=4$ CC & isolated $t=4$ static \\
\midrule
256 & 0.00921 & 0.00753 & 0.00506 & 0.00500 \\
512 & \textbf{0.00804} & 0.01253 & \textbf{0.00379} & 0.00691 \\
\bottomrule
\end{tabular}
\end{table}

For the qubit clock validation, the source retention is $0.9$ and nine target levels span $0$ through $0.8$ in increments of $0.1$.  The maximum absolute discrepancy between the restricted primal and the exact BS value is $3.533\times10^{-9}$ nats.

For the theorem-aligned eight-step scheduling and reverse-learning validation, the same equal-prior binary-qubit source pair is initialized at cumulative retention $\lambda_0=0.9$ and terminated at $\lambda_8=0$.  This gives $J_0=0.456368176$ and $J_0/T=0.057046022$ nats.  The exact BS-optimal branches and binary-symmetric transition kernels are propagated analytically.  Maximum propagation, branchwise barycenter, and state-reconstruction residuals are $3.331\times10^{-16}$, $1.889\times10^{-15}$, and $5.769\times10^{-16}$, respectively; the two independent evaluations of $c_t$ and the Bayes cross-entropy gap agree to $3.053\times10^{-16}$.

The learned validation uses ten fixed seeds.  Each schedule provides $4096$ sampled training trajectories and $10^5$ held-out trajectories.  Both reverse predictors use categorical one-hot inputs, one $\tanh$ hidden layer of width $32$, and one Bernoulli logit.  The source-blind model receives $(K_t,t)$ and the source-aware reference additionally receives $X$.  Both are optimized with Adam at learning rate $0.01$ for $1200$ updates with batch size $512$, with paired initializations across schedules.  The resulting control experiment is summarized in \cref{tab:qubit-learned-control}.

\begin{table}[t]
\centering
\caption{Analytic binary-qubit control. Learned quantities are means over ten fixed seeds.  MAE is $\frac1T\sum_t|\widehat c_t-c_t|$.}
\label{tab:qubit-learned-control}
\scriptsize
\begin{tabular}{lrrrrr}
\toprule
Schedule & $\max c_t/(J_0/T)$ & CV$(c_t)$ & $\max\widehat c_t$ & CV$(\widehat c_t)$ & MAE \\
\midrule
Equal information & 1.000 & 0.000 & 0.05938 & 0.0287 & 0.00151 \\
Linear retention & 2.590 & 0.792 & 0.14757 & 0.8105 & 0.00217 \\
Cosine retention & 2.068 & 0.762 & 0.11791 & 0.7688 & 0.00189 \\
\bottomrule
\end{tabular}
\end{table}

\section{Additional consequences and boundary cases}
\label{app:additional}
\subsection{Scheduling bracket}
Let $\rho_x(a)=\cD_a(\rho_x)$ and $B_\xi(a)=B(\{p_x,\rho_x(a)\})$. Whenever the average state is faithful,
\begin{equation}
B_\xi(a)\le V_\xi(a)\le aJ.
\label{eq:V-sandwich}
\end{equation}
For $0<\tau<J$, define $a_{\rm BS}^\star(\tau)=\sup\{a\in[0,1):B_\xi(a)\le\tau\}$. Then
\begin{equation}
\frac{\tau}{J}\le a_\xi^\star(\tau)\le a_{\rm BS}^\star(\tau).
\label{eq:retention-sandwich}
\end{equation}
The left endpoint follows from the keep-or-reset upper bound and the right from $B_\xi\le V_\xi$; the right inequality is an equality along the BS-optimal binary-qubit trajectory.

\subsection{Clock-defect identity}
For any realized trajectory with $B_t:=B(\cE_t)$, set $\Gamma_t:=J_t-B_t\ge0$. Then
\begin{equation}
c_t=(B_{t-1}-B_t)+(\Gamma_{t-1}-\Gamma_t).
\label{eq:clock-defect}
\end{equation}
\subsection{Commuting ensembles in arbitrary dimension}
If all density states commute, choose a common orthonormal basis $\{P_j\}_{j=1}^{d}$.  Depolarization acts as
\begin{equation}
\cD_a(P_j)
=aP_j+\frac{1-a}{d}\sum_{i=1}^{d}P_i
=\sum_i\left(a\delta_{ij}+\frac{1-a}{d}\right)P_i.
\label{eq:commuting-kernel}
\end{equation}
Thus the basis simplex is closed under the forward channel with transition matrix \eqref{eq:commuting-kernel}.  If $q_x(j)$ are the diagonal coefficients and $\bar q(j)=\sum_xp_xq_x(j)$, the basis record has $I(X{:}J)=\sum_xp_x\DKL(q_x\|\bar q)=B(\cE)$, because the BS divergence reduces to classical KL on commuting states.  Hence this representation is BS-optimal and remains dynamically closed; in this classical subfamily, no causal correction is needed.

\subsection{Clock-defect bounds}
Equation~\eqref{eq:clock-defect} immediately quantifies the error made by scheduling with $B$ instead of the realized trajectory information.  If a chosen BS grid has constant decrement $B_{t-1}-B_t=c$ and $\Gamma_0=\Gamma_T=0$, then
\begin{equation}
c_t-c=\Gamma_{t-1}-\Gamma_t.
\end{equation}
If $M=\max_t\Gamma_t$, the total variation of a path that starts and ends at zero gives
\begin{equation}
\sum_{t=1}^{T}|c_t-c|\ge2M.
\end{equation}
By Cauchy--Schwarz,
\begin{equation}
\sum_{t=1}^{T}(c_t-c)^2\ge\frac{4M^2}{T},
\qquad
\max_t|c_t-c|\ge\frac{2M}{T}.
\end{equation}
These bounds are not needed for the main scheduling theorem because $V$ directly corrects the clock, but they make explicit how a density-only schedule can misallocate reverse difficulty when the realization gap varies across time.

\subsection{Faithful references and rank-deficient endpoints}
The definition \eqref{eq:DBS} only requires the reference state to be faithful; the first argument may be rank deficient.  For any ensemble average $\bar\rho$ and retention $a<1$,
\begin{equation}
\cD_a(\bar\rho)\succeq\frac{1-a}{d}\id>0,
\end{equation}
so every interior BS comparison used in the paper is covered directly, even when individual data states are rank deficient.  The channel-constrained value $V$ itself has no faithfulness requirement.  If the ensemble average is rank deficient at an endpoint, $B$ may be evaluated after restricting to its support (when the support condition is satisfied) or by taking the faithful interior limit; none of the causal scheduling results depends on selecting an endpoint convention.

\subsection{Beyond depolarization}
For an arbitrary CPTP map $\Phi$, one can define
\begin{equation}
V_\xi(\Phi)
:=\inf_{\Lambda(\nu_k)=\Phi(P_k)}I(X{:}K').
\end{equation}
The reverse-information identity and finite moment/dual formulations continue to hold.  The full schedule inversion in the main text additionally uses an ordered one-parameter family with: (i) a composition order that makes stronger noise a post-processing of weaker noise, (ii) a fully mixing endpoint with value zero, and (iii) continuity in the channel parameter.  Depolarization satisfies all three properties exactly.  Extending the least-noise scheduling theorem to other quantum noise families therefore reduces to verifying these channel-order properties rather than redefining the information objective.

\end{document}